\documentclass[a4paper,UKenglish,cleveref, autoref, thm-restate]{lipics-v2021}

\usepackage{xcolor}
\usepackage{paralist}
\usepackage{todonotes}

\usepackage[noend]{algorithm}
\usepackage[noend]{algpseudocode}

\algrenewcommand\algorithmicrequire{\textbf{Input:}}
\algrenewcommand\algorithmicensure{\textbf{Output:}}

\newcommand{\R}{\mathbb{R}}
\newcommand{\IND}{\mbox{\rm index}}
\newcommand{\bridge}{\mbox{\rm bridge}}
\newcommand{\Bridges}{\mbox{\rm Bridges}}
\newcommand{\UH}{\mbox{\rm UH}}
\newcommand{\Pgood}{P_{\mbox{\scriptsize\rm good}}}
\newcommand{\Pbad}{P_{\mbox{\scriptsize\rm bad}}}

\newtheorem{problem}{Problem}

\newcommand{\jeff}[1]{\textcolor{orange}{[Jeff: #1 ]}}

\title{Computing Planar Convex Hulls with a Promise} %TODO Please add

\author{Sepideh Aghamolaei}{Department of Computer Engineering, Amirkabir University of Technology, Tehran, Iran}{princess@aut.ac.ir}{https://orcid.org/0000-0003-1667-6323}{}

\author{Kevin Buchin}{Department of Computer Science, TU Dortmund, Germany \and \url{https://ae.cs.tu-dortmund.de/team/buchin/}}{kevin.buchin@tu-dortmund.de}{https://orcid.org/0000-0002-3022-7877}{Funded by the Deutsche Forschungsgemeinschaft (DFG, German Research Foundation) – Project number 550797858.}

\author{Timothy M. Chan}{Siebel School of Computing and Data Science, University of Illinois at Urbana-Champaign, USA}{tmc@illinois.edu}{https://orcid.org/0000-0002-8093-0675}{Partially supported by NSF Grant CCF-2224271.}

\author{Jacobus Conradi}{University of Copenhagen, Denmark}{jaco@di.ku.dk}{https://orcid.org/0000-0002-8259-1187}{Funded by the Carlsberg Foundation, grant CF24-1929.}

\author{Ivor Van der Hoog}{IT University of Copenhagen, Denmark}{ivva@itu.dk}{https://orcid.org/0009-0006-2624-0231}{Supported by the VILLUM Foundation grant (VIL37507) ``Efficient Recomputations for Changeful Problems''.}

\author{Vahideh Keikha}{Czech Academy of Science, Institute of Computer Science, Czechia}{keikha@cs.cas.cz}{https://orcid.org/0000-0003-2821-5903}{}

\author{Jeff M. Phillips}{Kahlert School of Computing, University of Utah, USA \and \url{https://users.cs.utah.edu/~jeffp/} }{jeffp@cs.utah.edu}{https://orcid.org/0000-0003-1169-2965}{thanks his support from NSF CCF-2115677, 2421782, and Simons Foundation
MPS-AI-00010515.}

\author{Benjamin Raichel}{Department of Computer Science,
  University of Texas at Dallas, USA \and
  \url{http://utdallas.edu/\string~benjamin.raichel} }
{benjamin.raichel@utdallas.edu}%
{https://orcid.org/0000-0001-6584-4843}%
{Work on this paper was partially supported by NSF AF Award
  CCF-2311179.}

\authorrunning{Aghamolaei, Buchin, Chan, Conradi, van der Hoog, Keikha, Phillips, and Raichel} %TODO mandatory. First: Use abbreviated first/middle names. Second (only in severe cases): Use first author plus 'et al.'

\Copyright{Aghamolaei, Buchin, Chan, Conradi, van der Hoog, Keikha, Phillips, Raichel} %TODO mandatory, please use full first names. LIPIcs license is "CC-BY";  http://creativecommons.org/licenses/by/3.0/

\ccsdesc[500]{Theory of computation~Computational geometry} %TODO mandatory: Please choose ACM 2012 classifications from https://dl.acm.org/ccs/ccs_flat.cfm 

\keywords{Convex hulls, Randomised algorithms, Lower bounds} %TODO mandatory; please add comma-separated list of keywords

\category{} %optional, e.g. invited paper

\relatedversion{}

\acknowledgements{This paper is the result of research which began at the Banff International Research Station (BIRS) workshop titled ``Uncertainty in Combinatorial and Computational Geometry'' (26w5662).}%optional

\EventEditors{John Q. Open and Joan R. Access}
\EventNoEds{2}
\EventLongTitle{42nd Conference on Very Important Topics (CVIT 2016)}
\EventShortTitle{CVIT 2016}
\EventAcronym{CVIT}
\EventYear{2016}
\EventDate{December 24--27, 2016}
\EventLocation{Little Whinging, United Kingdom}
\EventLogo{}
\SeriesVolume{42}
\ArticleNo{XX}
\begin{document}
\nolinenumbers

\maketitle

%TODO mandatory: add short abstract of the document
\begin{abstract}

% alternative, shorter abstract
Computing the convex hull of a planar $n$-point set $P$ is one of the most fundamental problems in computational geometry. It has an $\Omega(n \log n)$ lower bound in the algebraic computation tree model of computation, and there are many convex hull algorithms that achieve this running time.
Classical results in computational geometry show that under special input assumptions it becomes possible to achieve sub-$O(n \log n)$ time algorithms to compute the convex hull. For instance, when the points are given as a sequence in lexicographical or angular order, the convex hull can be computed in linear time. Even under the weaker assumption that the sequence of points corresponds to the ordered vertices of a simple polygonal chain, linear-time algorithms exist. 

This naturally raises the question: Can the convex hull of a point set be computed in sub-$O(n \log n)$ time under weaker input assumptions? We answer this question positively. 
Under the promise that the input sequence of points contains the convex hull as a subsequence, we show a deterministic $O(n \sqrt{\log n})$-time algorithm to compute the convex hull of $P$.
Under randomisation, we achieve an expected running time of $O(n \log^{\varepsilon} n)$ for an arbitrarily small constant $\varepsilon > 0$.

We find our result surprising, as the points not on the convex hull  are allowed to behave adversarially to our convex hull construction algorithm. 
Yet, the promise that \emph{only} the points on the convex hull are sorted is sufficient for obtaining $o(n \log n)$-time algorithms. 
Finally, we show that the promise is not only sufficient but also tight: if we even slightly break this promise --- allowing even only one point on the convex hull to appear out of order --- then we show an adversarial $\Omega(n \log n)$-time lower bound.
As a consequence, we show that whether the given promise is true cannot be verified with fewer than $\Omega(n \log n)$ comparisons. Moreover, this implies that we negatively solve an open problem by L\"{o}ffler and Raichel, who conjectured sub-$O(n \log n)$ time algorithms for computing the convex hull of a supersequence that contains the convex hull as a subsequence.
\end{abstract}

\section{Introduction}

We assume as input a size-$n$ sequence $P$ of planar points.
The convex hull of $P$ is the smallest convex set that contains all points in $P$. It is represented by its boundary vertices in cyclical order.
Convex hulls are fundamental to computational geometry and are frequently and actively studied~\cite{brewer2024Dynamic, Goaoc2023Convex, Gaede2024Convex, Hoog2026ConvexALENEX, Wang2024Subpath, Wang2023Convex}.

There are many classical algorithms that compute the planar convex hull in $O(n \log n)$ time~\cite{andrew1979another,graham72,PreparataHong1977threedimensions}.
For points in three dimensions, Preparata and Hong~\cite{PreparataHong1977threedimensions} show an $O(n \log n)$-time algorithm. 
Chazelle~\cite{Chazelle1993optimalfixeddimension} presents an $O(n \log n + n^{\lfloor \frac{d}{2} \rfloor})$ time algorithm for when $P \subset \mathbb{R}^d$.
These algorithms are tight: the complexity of the convex hull of a $d$-dimensional set of $n$ points is in $\Theta( n^{\lfloor \frac{d}{2} \rfloor} )$~\cite{ziegler1994lectures}.
Furthermore, computing the convex hull has an $\Omega(n \log n)$ time lower bound in the algebraic computation tree model~\cite{Ben-Or83}.
%\end{comment}
\begin{comment}
These algorithms are tight: computing the convex hull has an $\Omega(n \log n)$ time lower bound in the algebraic computation tree model via a direct reduction from sorting~\cite{Ben-Or83}.
\end{comment}
 A variety of results also provide related lower bounds that exclude implicit constructions of the 3D convex hull~\cite{Erickson1995, Erickson1999New, EricksonSeidel1995}.

As discussed below, under suitable input assumptions sub-$O(n \log n)$-time algorithms exist for computing the convex hull in the plane. The aim of this paper is to provide such an algorithm under a considerably weaker assumption. We also show our assumption is tight, in the sense that just a slight violation of the assumption leads to an $\Omega(n \log n)$ lower bound.

\subparagraph{Linear-time algorithms.}
Graham~\cite{graham72} showed in 1972 a linear-time algorithm for when $P$ is sorted by their polar coordinates around $(0, 0)$.
%Polar coordinates are a geometrically natural concept, but they are also problematic from a computational point of view as exact angle computations are not supported by most models of computation (including the standard Real RAM as by Preparata and Shamos~\cite{preparata2012computational}). 
Andrew~\cite{andrew1979another} showed in 1979 a linear-time algorithm to construct the convex hull when $P$ is sorted by lexicographical order.
Note that if $P$ lies in general position, i.e., no two points share an $x$- or $y$-coordinate, then it suffices to have $P$ sorted by $x$-coordinate. 
If the sequence $P$ specifies a simple polygon then the convex hull can be computed in linear time. The first correct algorithm for this problem was given by McCallum and Avis~\cite{McCallumA79}, and many others followed, e.g.~\cite{Broennimann06space,GrahamY83,Lee83}. Melkman showed that the weaker assumption that $P$ specifies a simple polygonal chain is sufficient~\cite{Melkman87}.

Seidel~\cite{SEIDEL1985319} showed that $o(n \log n)$-time algorithms for sorted inputs are restricted to the plane only. He proved that even when given constantly many input sequences $I_P^1, I_P^2, \ldots I_P^k$, each containing $P \subset \mathbb{R}^3$ sorted along some direction, there remains an $\Omega(n \log n)$-time comparison-based lower bound for computing the three-dimensional convex hull of $P$.

\subparagraph{Tight sub-$O(n \log n)$ bounds.}
The $\Omega(n \log n)$ time lower bound is a \emph{worst-case lower bound}.
To circumvent this lower bound, several works offer a more refined analysis of convex hull construction that, under specific circumstances, achieves $o(n \log n)$ running times. 

Kirkpatrick, McQueen, and Seidel~\cite{kirkpatrick1986ultimate} consider output-sensitive analysis. If the convex hull of $P$ has $h$ vertices, then their algorithm runs in $O(n \log h)$ time and they show a matching comparison-based lower bound.
There is a long history of output-sensitive algorithms in $\mathbb{R}^3$~\cite{Chan1995Outputsensitive, Chan1995Output, Chand1970Convex, Clarkson1989, CHAZELLE199527, Edelsbrunner1991Threedim} and we consider the gift-wrapping algorithm by Chan~\cite{Chan1996Optimal} to be the ``textbook'' $\Theta(n \log h)$-time algorithm for point sets in $\mathbb{R}^2$ or $\mathbb{R}^3$.

Afshani, Barbay, and Chan~\cite{Afshani2017Instance} obtain even tighter bounds in the plane: 
for any fixed point set $P \subset \mathbb{R}^2$ and algorithm $\mathcal{A}$, define its \emph{input-sensitive running time} as its worst-case running time over all input sequences $I_P$  that store $P$. 
For a fixed point set $P$, the \emph{input-sensitive lower bound} is then the minimum over all algorithms $\mathcal{A}$ of their input-sensitive running time. 
Afshani, Barbay, and Chan~\cite{Afshani2017Instance} show that the input-sensitive lower bound for point sets $P$ is some entropy function and that the algorithm by Kirkpatrick, McQueen, and Seidel~\cite{kirkpatrick1986ultimate} asymptotically matches this lower bound for all point sets $P$. Van der Hoog, Rotenberg, and Rutschmann~\cite{Hoog2025Convex} show a simpler proof.  
Eppstein, Goodrich, Illickan, and To~\cite{EppsteinEtAlCCCG2025Entropy} provide an even tighter analysis by combining Andrew~\cite{andrew1979another} with the entropy function from~\cite{Afshani2017Instance} for which van der Hoog, Rotenberg, and Rutschmann~\cite{vanDerHoogRustenmann2025TightUniversal} provide matching lower bounds.

\subparagraph{Further related work: imprecise convex hulls.}
Another way of avoiding $\Omega(n \log n)$-time lower bounds is the \emph{preprocessing model of imprecise geometry} by  Held and Mitchell~\cite{held2008triangulating}.
In this model of computation, one first receives ``geometric advice'' about the planar input sequence $P$. 
This geometric advice is a labelled set of regions $\mathcal{R} = (R_1, \ldots, R_n)$ such that for all $i$, $p_i \in R_i$. 
After preprocessing the advice $\mathcal{R}$, we can achieve $o(n \log n)$-time convex hull algorithms.
If $\mathcal{R}$ are disjoint unit disks, then a variety of results~\cite{devillers2011delaunay,evans2011possible, ezra2013convex, held2008triangulating,loffler2025preprocessing,loffler2010delaunay,van2010preprocessing} can preprocess $\mathcal{R}$ in $O(n \log n)$ time, and compute the convex hull in (deterministic or randomised) $O(n)$ time. 
If $\mathcal{R}$ consists of $n$ distinct lines, then Ezra and Mulzer can preprocess these lines in $O(n \log n)$ time to construct the convex hull in  $O(n \cdot \alpha(n))$ expected time, where $\alpha(n)$ denotes the inverse-Ackermann function. 
Buchin, L\"{o}ffler, Morin, and Mulzer~\cite{buchin2011preprocessing} consider $\mathcal{R}$ to be unit disks that may overlap up to a \emph{ply} of $k$ and, after $O(n \log n)$-time preprocessing, achieve an $O(n \log k)$-time algorithm.
Recently,  de Berg, van der Hoog, Rotenberg, Rutschmann, and Wong show a variety of bounds for disks with bounded radii or bounded ply~\cite{deberg2025Instance-optimal}.

L\"offler and Raichel \cite{lr-pdch-26} consider the same setting, where the advice is given as a set of unit disks of ply at most $k$. 
They show that for any such advice $\mathcal{R}$, they can construct an ordered supersequence of points (i.e., an ordered sequence where each element is a point in $P$ that allows for repetitions) such that the vertices of the convex hull of $P$ form a subsequence of this supersequence. 
They reduce the problem of computing the convex hull under geometric advice to a natural and more abstract open problem:

\begin{problem}[Open Problem from \cite{lr-pdch-26}]
\label{Problem:supersequence}
Consider as input a sequence $I$ of size $m$, that stores all points of a size-$n$ point set $P$ (elements in $P$ may appear several times in the sequence). 
We additionally have the guarantee that there exists a subsequence $Q$ containing the vertices of the convex hull of $P$ sorted by $x$-coordinate. 
Given an input sequence where $m \in O(n)$, is it possible to compute the convex hull of $P$ in $o(n \log n)$ time? 
\end{problem}

\subparagraph{Contribution.}
In this paper, we observe that Problem~\ref{Problem:supersequence} has an even more natural formulation when $m = n$. 
In particular, we study the following problem: 
The input is a sequence $P$, under the promise that all points that lie on the (upper) convex hull appear in $P$ sorted by their $x$-coordinate. Our goal is to compute the (upper) convex hull in $o(n \log n)$ time.
Under this promise that $P$ contains the convex hull vertices as a subsequence, we show a deterministic $O(n \sqrt{\log n})$-time algorithm to compute the convex hull.
We additionally can achieve an expected running time of $O(n \log^{\varepsilon} n)$ for an arbitrarily small constant $\varepsilon > 0$. 

Our promise on the input is a direct generalisation of the promise by Graham~\cite{graham72} or Andrew~\cite{andrew1979another} who respectively
assume that the entire input is sorted by polar or $x$-coordinate. It is also strictly weaker than assuming that the input is ordered along a simple polygon or polygonal chain, since these assumptions imply that the ordered convex hull vertices are a subsequence.
One may wonder whether a weaker, and therefore more general, promise suffices to obtain $o(n \log n)$-time algorithms. 
We show that such generalisations are unlikely: 
Assume a weaker promise, where there exists a single unknown point $p^*$  and the sequence $P$ contains all its convex hull points (except for $p^*$) in sorted order.
We show that even for this slightly weakened promise, there exists an adversarial comparison-based $\Omega(n \log n)$ bound. 
We can additionally adapt our lower bound to show a negative answer for Problem~\ref{Problem:supersequence}: 
we show that given a supersequence of $P$ of size $m = 3n$, where the convex hull vertices of $P$ appear as a subsequence, there exists an $\Omega(n \log n)$-time lower bound. All our results also extend to the algorithmic problem of computing the Pareto front of $P$.

%L\"offler and Raichel \cite{lr-pdch-26} consider the problem of preprocessing a set of uncertain disks in the plane for computing the convex hull. Specifically, at query time a single point from each disk is given, i.e.\ the realisation of that disk, and the goal is to quickly report the convex hull of the resulting set of points. Different data structures can be built for the preprecessed disks, though \cite{lr-pdch-26} specifically consider the case when the data structure is simply an ordered sequence of the disks (with repetitions), with this promise that when the disks in the sequence are replaced with their realisations that the true convex hull appears as an ordered subsequence. Given such a sequence with the promise, the goal at query time is to quickly extract the convex hull from the sequence. \cite{lr-pdch-26} achieve this extraction using additional properties specific to given construction in the uncertain disk setting, however, this motivated \cite{lr-pdch-26} state the open problem of determining how quickly the convex hull can be extracted given only this promise. That is, the following problem can be consider completely independently of the original uncertain preprocessing motivation framework. 

\section{Preliminaries}

The input  is an ordered sequence of points $P =(p_1,\ldots,p_n) \subset \mathbb{R}^2$. For each point $p_i$ in the sequence, we denote its \emph{index} by $\IND(p_i)=i$, and its coordinates by $x(p_i)$ and $y(p_i)$. 
We assume $P$ to lie in general position, i.e., no two points share an $x$- or $y$-coordinate.

\subparagraph{Defining the convex hull promise.} We consider the problem of computing the \emph{convex hull} of the points $P$, denoted by $\mathrm{conv}(P)$.  Let $p_{\min}$ and $p_{\max}$ be the lexicographically smallest and largest points in $P$, respectively.
The \emph{upper convex hull} $\UH(P)$ is the part of the boundary of $\mathrm{conv}(P)$ running from $p_{\min}$ to $p_{\max}$ in clockwise order. It is represented by its sequence of vertices. The \emph{lower convex hull} is defined analogously, running from $p_{\max}$ back to $p_{\min}$. The complete boundary of $\mathrm{conv}(P)$ is obtained by concatenating the upper and lower convex hulls. Algorithms that construct the convex hull of $P$ typically focus on computing the upper convex hull. Computing the lower convex hull is done analogously, and they are then joined in linear time. We also take this approach in this paper. 
An input sequence is said to fulfil the \emph{convex hull promise} (refer to \Cref{fig:promise}), if for any two points $p,q \in P$ that lie on the convex hull we have that the index of $p$ is less than the index of $q$ if and only if the $x$-coordinate of $p$ is less also (formally,  $\IND(p)<\IND(q) \iff x(p)<x(q)$). 
Alternatively, it is more general to assume that the input fulfils both the \emph{upper} and \emph{lower convex hull promises} which are defined analogously for points on  the upper or lower convex hull, respectively.

Note that if $P$ fulfils the promise, it is not necessarily true that any (contiguous) subsequence of $P$ fulfils the promise --- which complicates designing divide-and-conquer strategies.
%Note that this is equivalent to requiring that the upper convex hull points, when ordered along the input sequence, form a simple polygonal curve. 

\begin{figure}[!b]
    \centering
    \includegraphics[width=\linewidth]{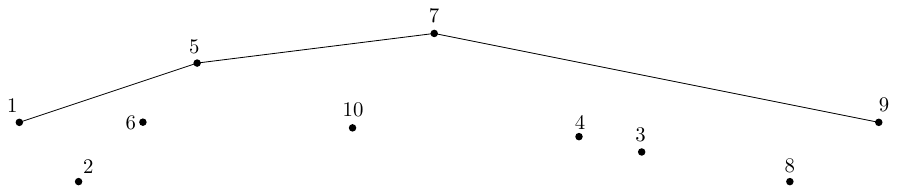}
    \caption{Illustration of the promise. The vertices defining the convex hull are $\{1,2,5,7,8,9\}$, which are sorted by their $x$-coordinate. The illustrated upper hull is defined by $\{1,5,7,9\}$.}
    \label{fig:promise}
\end{figure}

\subparagraph{Extra results: Pareto front.}
We note that we can also obtain results for the related problem of computing the \emph{Pareto front} of the points $P$.
The Pareto front is the sequence of points in $P$ that consists of exactly all points $p \in P$ for which there does not exist a point $q \in Q$ which has larger $x$- \emph{and} $y$-coordinate, ordered by $x$-coordinate. 
We can define a \emph{Pareto front promise} analogously: an input sequence is said to fulfil the \emph{Pareto-front promise} if for any two points $p,q$ on the Pareto front, the index of $p$ is less than the index of $q$ if and only if its $x$-coordinate is less also.
Our Pareto front results follow through similar techniques, and we thus henceforth focus on defining the appropriate convex hull terminology.

\subparagraph{Defining $L$-bad pairs.}
In this paper, we show that we can circumvent the $\Omega(n \log n)$-lower bound for computing these structures for point sets that adhere to the convex hull (or Pareto front) promise. 
Our main observation to enable faster-than-$O(n \log n)$ computations is the following: Suppose we find two points $p$ and $q$ where their index-ordering does not match their $x$-ordering, then we know that at most one of $p$ and $q$ can be on the convex hull. To obtain $o(n\log n)$-time algorithms, we cannot spend time sorting the input to detect these pairs. Instead, we design an algorithm that detects these out-of-order pairs approximately using what we call $L$-bad pairs:

\begin{definition}
Let $L$ be a set of vertical lines in $\R^2$.
Two points $p$ and $q$ are said to be an \emph{$L$-bad pair}, if $p$ and $q$ are separated by a line in $L$ and 
either $\IND(p)<\IND(q)$ and $x(p)>x(q)$, or,  $\IND(p)>\IND(q)$ and $x(p)<x(q)$.
\end{definition}

\subparagraph{Slabs and bridges.}
Our approach relies on computing a suitable set of vertical lines $L$ and finding sufficiently many $L$-bad pairs. 
To this end, we observe that such a set $L$, where no two lines share an $x$-coordinate, partitions $\R^2$ into \emph{vertical slabs}, which we denote by $\Gamma(L)$. 
%For some suitable set of lines $L$, we subsequently compute \emph{convex hull bridges}.

%For any point set $P$, the \emph{convex hull} is the sequence of points along the smallest convex area containing $P$. 
Interpreting the edges of the upper convex hull as half-open segments, the upper convex hull has the property that any vertical line $\ell$ intersects either no edge of $\UH(P)$, or exactly one edge of $\UH(P)$. For any vertical line $\ell$, we denote by $\bridge(P,\ell)$ the unique edge of $\UH(P)$ that is intersected by $\ell$. For a set $L$ of vertical lines we define the set of bridges $\Bridges(P,L)=\bigcup_{\ell\in L}\{\bridge(P,\ell)\}$. 

\begin{lemma}\label{lem:abstract-bridge-compute}\cite[Theorem 3.1]{kirkpatrick1986ultimate} 
    Given $P$ and any vertical line $\ell$, we can compute $\bridge(P,\ell)$ in $O(|P|)$ time.
\end{lemma}
\begin{comment}
\begin{proof}
    In linear time, we may transform $P$ such that $\ell$ is the $y$-axis. Consider the linear program with variables $m,b$:
    \[\min_{m,b}\ b\qquad\text{subject to}\qquad y(p)\le m\cdot x(p)+b \quad \text{for all } p\in P.\]
    Any feasible solution $(m,b)$ defines a line $L_{m,b}:\ y=mx+b$ that lies above every point of $P$. Since $\ell$ is the $y$-axis, the value of $b$ is exactly the $y$-coordinate where $L_{m,b}$ intersects $\ell$. Thus, minimising $b$ is equivalent to finding, among all lines that lie above $P$, the one whose intersection with $\ell$ is as low as possible.
    This is exactly the supporting line of the upper hull of $P$ that crosses $\ell$. Therefore, the optimal line is the line through the $\bridge(P,\ell)$ as if there were a lower such line crossing $\ell$, then the segment defining the bridge would not be the upper tangent crossing $\ell$.
    The program is a two-dimensional linear program with $O(|P|)$ constraints, so it can be solved in $O(|P|)$ time \cite{Megiddo1984}.  At an optimum, the line $L_{m,b}$ is tight for at least one point of $P$, and for the supporting line corresponding to the bridge it is tight for the two endpoints of the bridge. Hence, after computing $(m,b)$, a linear scan through $P$ can identify all points satisfying $y(p)=mx(p)+b$.
    Among these tight points, the two extreme ones on the line are exactly the two endpoints defining $\bridge(P,\ell)$.
\end{proof}
\end{comment}

\section{Warm up: A deterministic $O(n\sqrt{\log n})$ algorithm}

As is convention, we focus on computing only the upper hull $\UH(P)$. The algorithm for computing the lower hull is symmetric and merging them yields the convex hull.

We start with defining a \emph{slab decomposition} on a set $L$ of vertical lines $\{\ell_1, \ell_2, \ldots, \ell_{|L|}\}$.  For each $\ell_i$ we associate it with its $x$-value in $\R$, and for notational convenience we also define $\ell_0 = -\infty$ and $\ell_{|L|+1} = \infty$.  
Each vertical slab $\gamma_i \in \Gamma(L)$ is defined by a left $\ell_i$ and right $\ell_{i+1}$ line in $L$ marking its end points.  Let $P_{\gamma_i}$ be the subset of $P$ that satisfies $\ell_i \leq x(p) < \ell_{i+1}$.  
%We augment the slab decomposition to also store the bridges $\Bridges(P,L)$.
For an $\ell_i\in L$, we denote by $v_i^-,v_i^+ \in P$ the left and right end points of $\bridge(P,\ell_i)$. Now define the safe points $P^{(\gamma_i)} \subset P_{\gamma_i}$ as all points $p$ in $P_{\gamma_i}$ that meet all of the following requirements:
\begin{itemize}
    \item $p$ does not lie under one of the bridges $\bridge(P,\ell_i)$ and $\bridge(P,\ell_{i+1})$;
    \item the index of $p$ is at least that of $v_i^+$;
    \item the index of $p$ is at most that of $v_{i+1}^-$.
\end{itemize}
%points that do not violate the points defining these bridges in terms of $x$ order, and do not violate the points defining these bridges in terms of index.  
Formally, $P^{(\gamma_i)} = \{ p \in P \mid  \IND(v_i^+)\leq\IND(p)\leq\IND(v_{i+1}^-)$ \& $x(v_i^+)\leq x(p)\leq x(v_{i+1}^-)\}$. 
%\ben{+/- in algorithm 1 could be confused with notation here. Also, may be worth remarking $x(v_i^+)>x(v_{i+1}^-)$ is possible if edge goes across slab, definition still fine though as empty set then.}  
We define the slab decomposition on $L$ as the splitting of $P$ into the sets $P^{(\gamma_i)}$ and $P_{\gamma_i} \setminus P^{(\gamma_i)}$.

%For every $\gamma\in\Gamma(L)$ let $P_\gamma$ be the subset of $P$ that is restricted to $\gamma$. Then $\Bridges(\bigcup_{\gamma}P_\gamma,L)$ can be computed in $O(|L|\cdot\max_\gamma|P_\gamma|)$ time.
%\todo{This makes no sense. Per definition, $P = \bigcup_{\gamma}P_\gamma$ and so you are just computing $\Bridges(P, L)$. If that is what you mean, then don't define all the $P_\gamma$'s yet and use them in the proof interior.}\jacobus{$P_\gamma$ appears in the running time. Also the $P_\gamma$'s need to be given as input. So I cannot move it to the interior. }
\begin{comment}
\begin{proof}
    Graham's scan on the sets $P_\gamma$ requires $O(|L|)$ tangent-finding operations. By \Cref{lem:abstract-bridge-compute} each such operation takes $O(\max_\gamma|P_\gamma|)$ time.\todo{I know exactly what you mean, and still want significantly more details ;)}
\end{proof}
\end{comment}

\begin{lemma}\label{lem:slabs-correct}
 Points in $P_{\gamma_i} \setminus P^{(\gamma_i)}$ cannot be on the upper convex hull if the promise holds.  
\end{lemma}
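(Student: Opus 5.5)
Fix a slab $\gamma_i$ and a point $p\in P_{\gamma_i}\setminus P^{(\gamma_i)}$. By the formal definition, $p$ violates at least one of four inequalities: $x(p)<x(v_i^+)$, $x(p)>x(v_{i+1}^-)$, $\IND(p)<\IND(v_i^+)$, or $\IND(p)>\IND(v_{i+1}^-)$. The plan is a case distinction over these four violations. The first two are geometric (the point lies under a bridge), the last two are index-based (they contradict the promise). By left--right symmetry it suffices to handle the two conditions involving $v_i^+$ and $\bridge(P,\ell_i)$. The boundary cases $i=0$ or $i=|L|$, with $\ell_0=-\infty$ and $\ell_{|L|+1}=\infty$, impose no condition and are treated as vacuous.

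\emph{Geometric case.} Suppose $x(p)<x(v_i^+)$. Since $p\in P_{\gamma_i}$, we have $x(p)\ge \ell_i > x(v_i^-)$, using that $\bridge(P,\ell_i)$ is crossed by $\ell_i$ as a half-open segment. So $x(v_i^-)<x(p)<x(v_i^+)$. The segment $v_i^-v_i^+$ is an edge of $\UH(P)$, so the upper hull over this $x$-interval is exactly that segment. By general position, $p$ is then not a vertex of $\UH(P)$: it lies strictly below the edge, or on it but not as a vertex. The case $x(p)>x(v_{i+1}^-)$ is symmetric, with $x(v_{i+1}^-)<x(p)<\ell_{i+1}\le x(v_{i+1}^+)$.

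\emph{Index case.} Suppose $\IND(p)<\IND(v_i^+)$ and assume, for contradiction, that $p$ is on the upper hull. Since $v_i^+$ is also a hull vertex, the promise gives $x(p)<x(v_i^+)$. We have already shown that such a $p$ in $P_{\gamma_i}$ lies strictly inside the $x$-range of the bridge and so cannot be a hull vertex, which is a contradiction. Symmetrically, $\IND(p)>\IND(v_{i+1}^-)$ together with the promise forces $x(p)>x(v_{i+1}^-)$, which the geometric case excludes.

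The main subtlety is making the half-open bridge convention precise, so that the strict inequality $x(v_i^-)<\ell_i\le x(p)$ holds. A related point is the degenerate case where $\ell_i$ passes through a hull vertex $v$ exactly. With the half-open convention, that vertex is $v_i^-$ and $x(v_i^+)>\ell_i$, or, under the other orientation, $v_i^+=v$. In either case the argument goes through. This should be fixed at the start of the proof.
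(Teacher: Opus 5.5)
Your proof is correct in its main line and follows the paper's argument. A hull vertex in $P_{\gamma_i}$ must have $x$-coordinate between $x(v_i^+)$ and $x(v_{i+1}^-)$ because the two bridges are hull edges, and the promise then transfers this to the indices. Your version makes the four-way case split explicit and reduces the index case to the geometric case, which the paper compresses into two sentences.

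One claim in your last paragraph is wrong: the degenerate case does not go through ``in either case''. Suppose edges are taken closed at their left endpoint and $\ell_i$ passes through a hull vertex $v$. Then $v=v_i^-$, so $x(v_i^-)=\ell_i$ and your strict inequality fails. Worse, $v$ itself lies in $P_{\gamma_i}$, because slabs are closed on the left, and it has $x(v)<x(v_i^+)$. Hence $v\in P_{\gamma_i}\setminus P^{(\gamma_i)}$ although it is a hull vertex, so under that convention the lemma itself fails in this configuration, not just your argument. This case really arises, since the algorithms place lines at $x$-coordinates of input points; the algorithms would survive only because $v$ is still reported as a bridge endpoint.

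So you must commit to the convention that each hull edge contains its right endpoint, matching the slab convention $\ell_i\le x(p)<\ell_{i+1}$. Your inequality $x(v_i^-)<\ell_i$ already presupposes exactly this convention. The paper tacitly relies on the same choice without stating it, so with this correction your write-up is more careful than the paper's.
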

\begin{proof}%\ben{Issues with some sentences in this proof.}
The points  $v_i^+$ and $v_{i+1}^-$ associated with slab $\gamma_i$ are by construction endpoints of edges of the upper hull of $P$.  By definition of the upper convex hull, any point $p \in P_{\gamma_i}$ which has $x(p)$ in the slab, can only be on the upper hull if its $x(p)$ lies between $x(v_i^+)$ and $x(v_{i+1}^-)$.  Moreover, if the promise holds, then $p$ must also have index value between that of $v_i^+$ and $v_{i+1}^-$.  Points in $P_{\gamma_i} \setminus P^{(\gamma_i)}$ violate, by definition, one of these two conditions.
\end{proof}

We can therefore construct the upper convex hull by a two-step process: (1) build the slab decomposition with some smartly computed set $L$, and then (2) recurse on each safe subset $P^{(\gamma)}$. This is how our deterministic algorithm (Algorithm \ref{alg:prob1}) will proceed.  The question remains how to choose the slabs, efficiently construct the decomposition, and analyse the result.  We will require a \emph{balanced slab decomposition} with $|L| = b$ defined such that each $P_\gamma$ for $\gamma \in \Gamma(L)$ has size $O(n/b)$.  

%We begin by analysing the construction of the slab decomposition. 
%We do this in two steps.   
To compute such a decomposition, we first compute $b$ lines $L$ (for a parameter $b$ to be fixed later), and for every $\gamma\in\Gamma(L)$ compute the set of points $P_\gamma$. Its construction time will depend on an unknown number $z$ of $L$-bad pairs. Luckily, the bigger $z$, the smaller the safe subsets $P^{(\gamma)}$ of $P_\gamma$.
Second, we compute bridges and filter out points in $P_{\gamma_i} \setminus P^{(\gamma_i)}$.

\begin{lemma}\label{detLem1}
    Let $P$ be a sequence of $n$ points in $\R^2$ (we do not assume that $P$ fulfils the promise). Given a parameter $b$, we can generate a set $L$ of $O(b)$ vertical lines, and compute $P\cap\gamma$ for all slabs $\gamma\in\Gamma(L)$ in $O(n+z\log b)$ time for some $z$, such that
    \begin{itemize}
        \item $|P\cap\gamma|\leq O(n/b)$ for each $\gamma\in\Gamma(L)$, and
        \item there exist at least $z$ disjoint $L$-bad pairs.
    \end{itemize}
\end{lemma}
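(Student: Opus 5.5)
We choose $L$ by sampling from a prefix of the sequence and then bucket the points one by one with a finger search, charging every expensive search to an $L$-bad pair.

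\emph{Choosing the lines.} Pick $L$ as $O(b)$ quantiles in $x$ of $P$, computed by linear-time selection. A recursive median split to depth $\log b$ would cost $O(n\log b)$, which is too expensive, so we need something cheaper. Split $P$ into $b$ index-contiguous blocks of size $n/b$ and take the $x$-coordinates of every point of the first block as candidate lines. That alone does not balance the slabs, so instead we use a splitting procedure that processes points in index order and creates lines on the fly. The lines are kept sorted in a finger search structure, such as a level-linked balanced tree.

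\emph{Assigning points to slabs.} Maintain a finger at the slab of the previously processed point $p_{i-1}$. To place $p_i$, finger-search from that slab. If $p_i$ lands $d$ slabs away, the search costs $O(1+\log d)=O(1+\log b)$. Whenever a slab reaches $2n/b$ points, split it at its median in time linear in its size. These splits cost $O(n)$ in total, because each split halves the slab and needs $\Theta(n/b)$ new insertions before the same slab can split again. The number of lines therefore stays $O(b)$.

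\emph{Charging to bad pairs.} When $p_i$ moves to a slab strictly left of $p_{i-1}$'s slab, there is a line separating $p_{i-1}$ and $p_i$ with $\IND(p_{i-1})<\IND(p_i)$ and $x(p_{i-1})>x(p_i)$. This is an $L$-bad pair. Moves to the right cost up to $O(\log b)$ each but cannot be charged to such a pair. To fix this, replace the single finger by a monotone staircase structure: keep the current maximum slab reached so far in index order. A point landing to the right of that maximum advances it, and the total advance is $O(b)$ per sweep, costing $O(n+b)$ amortised. A point landing to the left of the maximum forms an $L$-bad pair with the point that attained the maximum. The catch is that these pairs must be \emph{disjoint}, while the same maximum point may be reused many times.

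\emph{Obtaining disjoint pairs.} This is the main obstacle. We handle it with a greedy matching. Keep a stack of unmatched ``witness'' points, one per slab, to the right of the current point. When a left-jump occurs, pair the current point with an unmatched witness from a slab to its right and pop that witness. If no unmatched witness exists, that is, all right-lying points are already matched, reset the maximum to the current slab. The search then costs $O(\log b)$ charged to the new pair, and the reset ensures that later right moves pay only for distance, amortised by the monotone advance. Each charged search consumes a fresh pair, so the pairs are disjoint and the count $z$ of charged searches satisfies the required bound. The total time is $O(n+z\log b)$. Checking that this reset does not break the $O(n)$ bound on rightward advances is the delicate step, and I would first try a potential function equal to the number of unmatched witnesses.
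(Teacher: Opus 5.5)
Your proposal has a genuine gap, exactly at the step you flag as ``delicate''. In your final scheme every point is located on arrival by a finger search. That cost must be paid either by the $O(n)$ term or by a fresh disjoint bad pair, and two cases are left unpaid. First, in the reset case (a left jump with no unmatched witness to its right) you say the $O(\log b)$ search is ``charged to the new pair''. But by assumption no pair is formed in that case. Second, after a reset the maximum moves back left, so the advances are no longer monotone. The ``total advance is $O(b)$'' argument therefore only holds within one epoch between resets, and each later rightward jump can again cost $\Theta(\log b)$ with nothing to charge it to. Nothing in your argument bounds the number of resets, so $O(n+z\log b)$ is not established. The potential-function argument is not carried out, and the interaction of witnesses with median splits is not addressed either.

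The paper avoids this with the idea you are missing: do not locate points during the sweep at all.
\begin{itemize}
\item It keeps a working set $L'$ and compares each new point only with the rightmost line of $L'$, in $O(1)$ time.
\item If the point lies left of that line, it is paired with an arbitrary \emph{live} point of the rightmost slab (one always exists by the invariant), and both are marked dead. Disjointness is then automatic.
\item When the rightmost slab runs out of live points, its left line is deleted from $L'$. This is the ``reset'', and it costs nothing because no search happens. The line stays in the final $L$, so earlier pairs remain $L$-bad.
\item Only at the end are points located. The $2z$ dead points are found by binary search in $O(z\log b)$ total. The live points are located in $O(n)$ total by comparing the $O(n/b)$ live points of each slab of $\Gamma(L')$ against the lines of $L$ inside it. Any slab that then overflows is split at its median, with amortised cost.
\end{itemize}
Your own scheme can probably be rescued, at least for a fixed set of slabs. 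Charge each reset to the pair that consumed the last witness of the old maximum slab. Charge each rightward advance to the pair that later consumes the advancing point; if that point is never consumed, the maximum never again drops below its slab, so such advance ranges are disjoint and sum to $O(b)$. But this requires invariants the proposal does not supply: at most one witness per slab, every new maximum slab receives a witness, and correct behaviour under median splits.
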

\begin{proof}
    We incrementally add points in index order and maintain a set $L'$ of lines. Initially $L'=\emptyset$. Each point that has been seen is marked \emph{live} or \emph{dead}. We maintain the invariant that each slab $\Gamma(L')$ contains at most $n/b$ live points, except the rightmost slab, which contains strictly between $0$ and $2n/b$ live points, unless there are currently $0$ live points. We maintain the list of all live points inside each slab of $\Gamma(L')$.

    To insert the next unseen point $p$, we mark $p$ as live and do the following:

    \begin{itemize}
        \item Case 1: $p$ is in the rightmost slab $\gamma$ of $\Gamma(L')$. If $\gamma$ now has $2n/b$ live points, we compute the median $x$-coordinate of these $2n/b$ points. We split $\gamma$ into its left and right subslab according to the median line $\ell$, and add $\ell$ to $L$, restoring the invariant. This takes $O(n/b)$ time but is done a total of $O(b)$ times, since Case 1 needs to be triggered $n/b$ times before it reaches the $2n/b$ limit after a split.  
        %We add this median line $\ell$ to $L$.
        \item Case 2: $p$ is not in the rightmost slab $\gamma$ of $\Gamma(L')$. Let $q$ be any live point in $\gamma$. Then $(p,q)$ is a $L'$-bad pair and we change the marks of $p$ and $q$ to dead. If $\gamma$ now has $0$ live points, we delete the rightmost line from $L'$, repeatedly until the rightmost slab has at least $1$ live point, restoring the invariant.
    \end{itemize}

    \noindent
    After this procedure, let $L$ be the set of all the lines that have ever been created ($L$ contains $L'$, but also includes lines that have been deleted from $L'$). Then $|L|,|L'| \in O(b)$. Moreover, we observe that all the pairs of dead points from Case 2 are $L$-bad. 

    We can maintain for each $\ell \in L \setminus L'$ in which slab $\beta \in \Gamma(L')$ it lies in by keeping $L$ in sorted order.  At no asymptotic overhead, we can also maintain for each slab of $\Gamma(L')$ the list of all live points in the slab. This allows us to map all live points to its slab of $\Gamma(L)$. In particular, we achieve this by iterating over each slab $\beta\in\Gamma(L')$ and comparing each of its $O(n/b)$ live points with each line of $L$ which is in $\beta$. If we denote by $n_\beta$ be the number of lines of $L$ in $\beta$, this takes $O(\sum_\beta(n/b)n_\beta)=O((n/b)|L|) = O(n)$ time. 

    Now finally we deal with the $L$-bad pairs.  In $O(\log b)$ time per point that is part of a detected $L$-bad pair, we locate their corresponding slab. Whenever such an insertion results in a slab $\gamma$ containing more than $n/b$ points, we again compute a median in $O(n/b)$ time, and distribute the points according to this added median line, and add this median line to $L$.  While this action takes $O(b)$ time, the total time across all $2z = O(n)$ bad points can be bounded by $O(n)$ time.  The first event in each slab might cost $O(b)$, but the second median splitting event can be amortised over $(n/b)/2$ insertions; so the total time is $O(b (n/b)) = O(n)$ for first instances in each of $O(b)$ slabs, and $O((n/b)/(n/b)) = O(1)$ amortised time for each subsequent insertion in a slab.     
    This ensures that afterwards $|P\cap \gamma|\leq n/b$ for every $\gamma\in\Gamma(L)$. Overall, this takes $O(n + z\log b)$ time. Afterwards we know $P\cap\gamma$ for every $\gamma\in\Gamma(L)$.
    %We still need to prove that $|P \cap \gamma| = O(n/b)$.  This was true among the live points by the incremental construction, but we need to account for the L-bad points inserted which may not violate the ultimate x-ordering vs. index constraint.  
\end{proof}

\begin{figure}
    \centering
    \includegraphics[width=\linewidth]{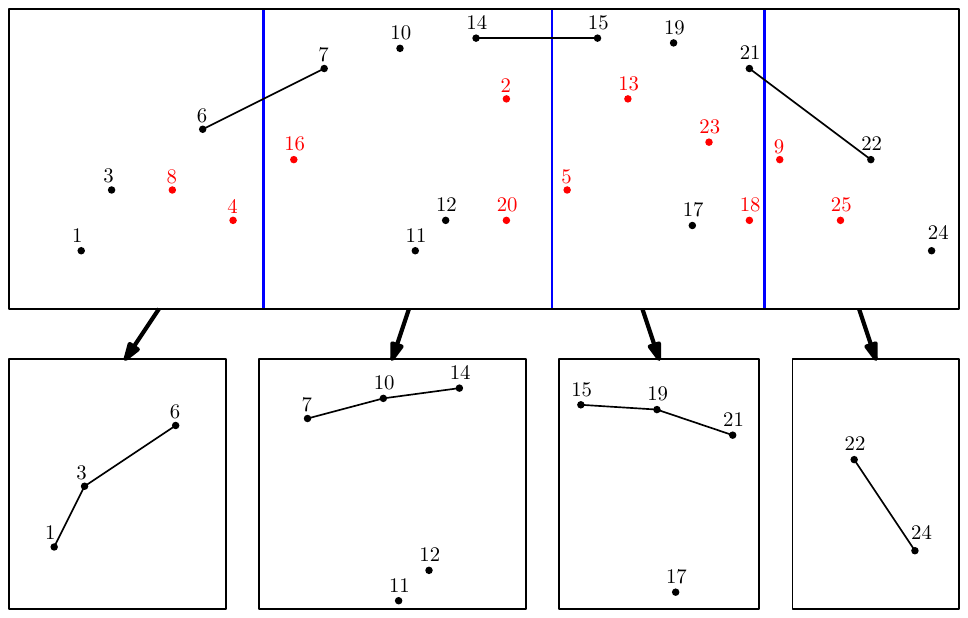}
    \caption{Illustration of recursive call tree of \Cref{alg:prob1}. Red points are pruned, and are not passed along in the recursion step, as either their index does not lie between the indices of $v^-$ and $v^+$, or, their $x$-coordinate does not lie between the $x$-coordinates of $v^-$ and $v^+$.}
    \label{fig:prunetree}
\end{figure}

%We can now compute the slab decomposition in $O(nb + z \log b)$ time by calling Lemma \ref{lem:abstract-bridge-compute} on all of $P$ for each $\ell \in L$.  But we can improve the first term to only $O(n)$ using a Graham's scan-style approach, and only invoking Lemma \ref{lem:abstract-bridge-compute} using a local subset of $P$.

Based on the decomposition of \Cref{detLem1} we can now compute a slab decomposition. To compute the bridges we use a Graham's scan-style approach together with \Cref{lem:abstract-bridge-compute}.

\begin{lemma}\label{cor:compute-slab-decomposition}
    Given a sequence $P$ of $n$ points in $\R^2$, and a parameter $b$, we can compute a set of $O(b)$ vertical lines $L$ and its induced slab decomposition of $P$.  For some $z$, where there are at least $z$ $L$-bad pairs in $P$ this takes $O(n+z\log b)$ time.
\end{lemma}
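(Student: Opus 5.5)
We start from \Cref{detLem1}, which in $O(n+z\log b)$ time returns a set $L$ of $O(b)$ vertical lines, the sets $P_\gamma=P\cap\gamma$ for every $\gamma\in\Gamma(L)$ with $|P_\gamma|=O(n/b)$, and a certificate of at least $z$ disjoint $L$-bad pairs. What remains is to compute $\Bridges(P,L)$, i.e.\ the endpoints $v_i^-,v_i^+$ for every $\ell_i\in L$, in $O(n)$ time. After that, the split of each $P_{\gamma_i}$ into $P^{(\gamma_i)}$ and $P_{\gamma_i}\setminus P^{(\gamma_i)}$ follows in $O(|P_{\gamma_i}|)$ time by comparing every point's index and $x$-coordinate with those of $v_i^+$ and $v_{i+1}^-$. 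Summed over all slabs, this filtering step costs $O(n)$.

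Calling \Cref{lem:abstract-bridge-compute} on all of $P$ for each of the $O(b)$ lines would cost $O(nb)$, which is too expensive. We avoid this with a Graham's-scan style sweep over the slabs from left to right. The key fact is that $\UH(P)$ is the upper hull of the union of the upper hulls $\UH(P_\gamma)$, and that the slabs are already ordered by $x$. We maintain a stack of slabs whose local hulls still contribute to the upper hull of everything processed so far. Each stacked slab $\gamma$ stores its current $\bridge$ information: the portion of $\UH(P_\gamma)$ not yet cut off.

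When a new slab $\gamma'$ arrives, we compute the common upper tangent between $P_{\gamma'}$ and the slab $\gamma$ on top of the stack. This is done by applying \Cref{lem:abstract-bridge-compute} to $P_\gamma\cup P_{\gamma'}$ with the separating line between them, in $O(n/b)$ time. If the tangent's left endpoint makes the top slab's contribution vanish, meaning the tangent passes above all of it, we pop $\gamma$ and repeat with the next slab on the stack. Otherwise we push $\gamma'$. Each slab is pushed once and popped at most once, so only $O(b)$ tangent computations occur, each costing $O(n/b)$, for $O(n)$ in total.

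At the end, consecutive stacked slabs are joined by the computed tangents. Every $\ell_i$ then falls either inside one stacked slab, whose local hull edge crossing $\ell_i$ we can find in $O(n/b)$ by \Cref{lem:abstract-bridge-compute}, or inside a tangent segment between two stacked slabs. This yields all bridges in $O(b\cdot n/b)=O(n)$ additional time.

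The main obstacle is making the pop condition rigorous. Merging only adjacent stacked slabs must suffice, which requires that the tangent between $\gamma$ and $\gamma'$ correctly decides whether $\gamma$ survives. It must also yield the true upper tangent of the union of everything processed, and not merely of two slabs. I would argue this as in Andrew's monotone chain algorithm, treating each slab's hull as a ``fat vertex''. Since the slabs are $x$-separated, an upper-hull tangent from $P_{\gamma'}$ to the union touches exactly one stacked slab. The candidate slab can then be found by popping while the tangent to the top slab lies above the tangent to the slab below it. To keep each test in $O(n/b)$, I would compare against the slab below using a single call of \Cref{lem:abstract-bridge-compute} on the points of two slabs at a time. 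If this comparison turns out to be delicate, a fallback is to recurse on $O(1)$ slabs at a time, pairing slabs in a balanced merge tree. That costs $O(n\log b)$, which is too much, so the stack argument is the one to get right.
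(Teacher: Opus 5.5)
Your proposal is correct and follows the paper's proof essentially step for step. You invoke \Cref{detLem1}, then sweep the $x$-ordered slabs with a Graham's-scan stack in which each push or pop costs one call of \Cref{lem:abstract-bridge-compute} on two slabs, giving $O(b)$ calls of $O(n/b)$ each. Your pop rule is exactly the paper's ``compatibility'' test: the new tangent touches the top slab to the left of where its incoming tangent does. You finish, as the paper does, with a linear scan to extract the safe sets.

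Two small remarks. The ``portion of $\UH(P_\gamma)$ not yet cut off'' should be stored implicitly as the endpoint of the incoming tangent, since building $\UH(P_\gamma)$ explicitly would cost $O(n\log(n/b))$. Your final read-off step is also simpler than stated: every $\ell_i$ is a slab boundary, so its bridge is just the stored tangent between the two consecutive surviving slabs on either side of it.
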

\begin{proof}
    First, we invoke \Cref{detLem1} to obtain a sorted sequence of vertical lines $L=\{\ell_1,\ldots,\ell_{|L|}\}$ 
    (recall $\ell_0 =-\infty$ and $\ell_{|L|+1} = \infty$ for notational convenience),
    and $P_i=P\cap\gamma_i$ for every slab $\gamma_i\in\Gamma(L)$ defined by $\ell_i$ and $\ell_{i+1}$. We now proceed in a Graham's scan-style fashion.

    While incrementing an integer $I$ from $1$  to $|L|-1$, we maintain $\Bridges(\bigcup_{i\leq I}P_i,\{\ell_1,\dots,\ell_{I}\})$. These are the edges of the convex hull of $\bigcup_{i\leq I}P_i$ that cross $\{\ell_1,\dots,\ell_I\}$. Every such edge starts in some set $P_i$ and ends in some $P_j$ with $j>i$. Concretely, we maintain the subsequence of $1,\ldots,I$ of the sets defining the edges of $\Bridges(\bigcup_{i\leq I}P_i,\{\ell_1,\dots,\ell_{I}\})$, similar to Graham's scan. Let $S_I$ be this sequence. In particular, if for some $k\leq I$ we have that $k\not\in S_I$, 
    then $k$ is also in no other $S_{I'}$ with $I'>I$, as no point of $P_k$ can be on the upper hull. 
    Also, as $P_{I+1}$ is right of $\ell_{I+1}$, and all other sets that have been considered so far lie left of $\ell_{I+1}$, $S_{I+1}$ consists of a prefix from $S_I$ together with $I+1$.

    To mimic Graham's scan, we consider adding $I+1$ to the subsequence, adding $P_{I+1}$ to the maintained set. Let $k$ be the current rightmost element in $S_I$, and let $j$ be the predecessor of $k$ in that subsequence. Since $\ell_{I+1}$ separates $P_k$ from $P_{I+1}$, \Cref{lem:abstract-bridge-compute} allows us to compute the bridge of $P_k\cup P_{I+1}$ across $\ell_{I+1}$ in $O(\max_i |P_i|)$ time. If this bridge is compatible with the currently maintained bridge incident to $P_k$, then $k$ is also in $S_{I+1}$ and we are done. Otherwise, $k$ is not in $S_{I+1}$, so we remove it from the maintained subsequence and continue with its predecessor. Thus, exactly as in Graham's scan, we repeatedly delete the rightmost index until we find the unique predecessor $P_j$ that is still connected to the new rightmost set $P_{I+1}$ by a bridge. This one bridge then replaces all deleted bridges to its right. In the end, we add $I+1$ to the subsequence resulting in $S_{I+1}$.

    As every set $P_i$ is inserted once and deleted at most once, the total number of bridge computations is $O(|L|)$. Since each such computation takes $O(\max_i |P_i|)$ time by \Cref{lem:abstract-bridge-compute}, the overall running time (outside the
    application of \Cref{detLem1}) is $O(|L|\cdot \max_i |P_i|) = O(b\cdot (n/b)) = O(n)$. Afterwards, we have computed $P_\gamma=P\cap \gamma$ for every $\gamma\in\Gamma(L)$ as well as the set $\Bridges(P,L)$. The last thing to compute for the slab decomposition is the subset of safe points $P^{(\gamma)}$. This can be done with a linear scan of $P_{\gamma}$, comparing the indices and $x$-coordinates of the points in $P_\gamma$ to the indices and $x$-coordinates of the endpoints of the (at most) two bridges of the two lines defining $\gamma$. Thus, after a total of $O(n + z\log b)$ time we have computed a slab decomposition of $P$.
\end{proof}

\begin{algorithm}
\caption{A deterministic $O(n\sqrt{\log n})$-time convex hull algorithm with a promise}\label{alg:prob1}
\begin{algorithmic}[1]
\Require Sequence $P$ of $n$ points in $\R^2$ fulfilling the convex-hull promise, fixed parameter $b$
\Ensure $\UH(P)$
\State If $n\leq b$, compute and return $\UH(P)$ via Graham's scan.
%\State Apply \Cref{detLem1} to obtain a set of $O(b)$ lines $L$, and $P\cap\gamma$ for all slabs $\gamma\in\Gamma(L)$.\label{line:1-2}
%\State Check that $|P\cap\gamma|=O(n/b)$ for all $\gamma\in\Gamma(L)$, else compute $\UH(P)$ via Graham's scan.
%\State Compute $\Bridges(P,L)$ via \Cref{cor:asbstract-bridges-compute}.\label{line:1-4}
\State Apply \Cref{cor:compute-slab-decomposition} to obtain a set of $O(b)$ lines $L$, and its resulting slab decomposition. 
\For{$\gamma \in \Gamma(L)$}  % with left side $\ell_i \in L$ and right side $\ell_{i+1} \in L$}
%\State Let $v_i^+$ be the right endpoint of $\bridge(P,\ell_i)$.
%\State Let $v_i^-$ be the left endpoint of $\bridge(P,\ell_{i+1})$.
%\State Let $P^{(\gamma_i)}=\{p\in P\cap\gamma_i \mid \text{$\IND(v_i^+) \leq \IND(p) \leq \IND(v_{i+1}^-)$ \& $x(v_i^+)\leq x(p)\leq x(v_{i+1}^-)$}\}$.
\State Recursively compute $\UH(P^{(\gamma)})$.
\EndFor
\addtocounter{linenumber}{-1}
\State Concatenate $\{\UH(P^{(\gamma)})\}$ and $\Bridges(P,L)$ and return $\UH(P)$.
\end{algorithmic}
\end{algorithm}

\begin{theorem}\label{thm:det}
    \Cref{alg:prob1} computes the upper convex hull of a sequence $P$ of $n$ points, assuming that the input sequence fulfils the convex-hull promise, in time $O(n \sqrt{\log n})$.
\end{theorem}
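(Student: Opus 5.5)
We prove correctness by induction on $n$ and bound the running time by charging costs to recursion levels and to discarded points. We use $b=2^{\sqrt{\log n}}$, fixed at the top level and kept unchanged throughout the recursion. Throughout we only need the \emph{upper} convex hull promise, and we first check that it is inherited by the recursive calls.

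\textbf{Correctness.} Fix a slab $\gamma_i$ and let $C_i$ be the portion of $\UH(P)$ from $v_i^+$ to $v_{i+1}^-$; if $x(v_i^+)>x(v_{i+1}^-)$, then a single bridge spans the slab and $P^{(\gamma_i)}=\emptyset$. By \Cref{lem:slabs-correct}, every vertex of $\UH(P)$ with $x$-coordinate in $\gamma_i$ lies in $P^{(\gamma_i)}$; in particular $v_i^+,v_{i+1}^-$ and all vertices of $C_i$ do. Every point of $P^{(\gamma_i)}$ lies on or below $\UH(P)$. Hence $\UH(P^{(\gamma_i)})=C_i$. The vertices of $\UH(P^{(\gamma_i)})$ are therefore vertices of $\UH(P)$, so the upper promise holds for the subsequence $P^{(\gamma_i)}$ and the induction hypothesis applies. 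Concatenating the chains $C_i$ with $\Bridges(P,L)$ in slab order yields $\UH(P)$. The base case uses Graham's scan.

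\textbf{Key structural claim.} For any $L$-bad pair $(p,q)$ produced by \Cref{cor:compute-slab-decomposition}, at most one of $p,q$ is a safe point. Suppose $p\in P^{(\gamma_i)}$ and $q\in P^{(\gamma_j)}$ with $i<j$; since the pair is separated by a line of $L$, the slabs differ. Then $x(v_{i+1}^-)<\ell_{i+1}\le \ell_j\le x(v_j^+)$, and both points are vertices of $\UH(P)$, so the promise gives $\IND(v_{i+1}^-)\le \IND(v_j^+)$. Consequently
\[
\IND(p)\le \IND(v_{i+1}^-)\le\IND(v_j^+)\le \IND(q)
\]
while $x(p)<x(q)$. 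This contradicts the pair being bad. Because the $z$ bad pairs are disjoint, at least $z$ points are discarded at this node, so $\sum_\gamma |P^{(\gamma)}|\le n-z$, and each $|P^{(\gamma)}|=O(n/b)$.

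\textbf{Running time.} A node of size $m$ costs $O(m+z\log b)$ by \Cref{cor:compute-slab-decomposition}. We split this into two parts.
\begin{itemize}
\item The $O(m)$ part: sizes shrink by a factor $\Omega(b)$ per level and the subproblems at each level are disjoint. Summed over the $O(\log n/\log b)$ levels, this gives $O(n\log n/\log b)$.
\item The $O(z\log b)$ part: it is charged as $O(\log b)$ to each of at least $z$ points that are discarded at that node and never appear again. Summed over the whole recursion, this gives $O(n\log b)$.
\end{itemize}
Leaves have size at most $b$ and are disjoint, so their Graham's scans cost $O(n\log b)$ in total. Altogether $T(n)=O(n\log n/\log b+n\log b)$, which with $\log b=\sqrt{\log n}$ is $O(n\sqrt{\log n})$.

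The step I expect to be the main obstacle is the structural claim. It is what lets the $z\log b$ term be charged to discarded points, and it relies on the bridge endpoints being genuine hull vertices whose index order is fixed by the promise. The inheritance of the promise by the subproblems, which rests on $\UH(P^{(\gamma)})=C_i$, is the other point that needs care.
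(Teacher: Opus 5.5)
Your proposal is correct and follows essentially the paper's proof. It uses the same pruning argument: one point of each disjoint $L$-bad pair is discarded, giving $\sum_\gamma|P^{(\gamma)}|\le n-z$ with $|P^{(\gamma)}|=O(n/b)$, and the same choice $\log b=\sqrt{\log n}$. The differences are minor: you solve the recurrence by summing the $O(m)$ cost over $O(\log n/\log b)$ levels and charging the $z\log b$ term to discarded points, whereas the paper verifies $F(n)=C(n\log n/\log b+n\log b)$ by induction; and you explicitly check, via $\UH(P^{(\gamma_i)})=C_i$, that the upper promise is inherited by each $P^{(\gamma)}$, which the paper leaves implicit.
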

\begin{proof}
\Cref{lem:slabs-correct} implies that the upper convex hull $\UH(P)$ does not contain any points of $P_\gamma\setminus P^{(\gamma)}$. Conversely, if $L$ is any set of vertical lines then the upper hull is defined by $\Bridges(P,L)$, and all $\UH(P^{(\gamma)})$ for $\gamma\in\Gamma(L)$, which is exactly what the algorithm computes, thus the upper hull is computed correctly.

For the running time, applying \Cref{cor:compute-slab-decomposition} takes $O(n + z\log b)$ time, where there are at least $z$ $L$-bad pairs in $P$. %, and applying \Cref{cor:asbstract-bridges-compute} to every slab takes $O(n)$ time, as there are $b$ slabs, and each slab contains $O(n/b)$ points.
In the \textbf{for}-loop, observe that for each $L$-bad pair $(p,q)$,
one of its points will be pruned:
if $p$ is left of $\ell$ and $q$ is right of $\ell$
for some line $\ell\in L$, and let $v$ be the right endpoint of $\bridge(P,\ell)$;
if $\IND(q)< \IND(v)$, then $q$ will be pruned;
otherwise, $\IND(p) > \IND(q) \ge \IND(v)$, so $p$ will be pruned.
Thus, $\sum_\gamma |P^{(\gamma)}|\le n-z$.

Overall, the run time satisfies the following recurrence for $n > b$:
\[ T(n)\ \le\ \max_{\scriptsize\begin{array}{c}z,n_1,n_2,\ldots:\\\sum_i n_i\le n-z\\ \max_i n_i \le O(n/b)\end{array}}
  \left(O(n+z\log b) +  \sum_i T(n_i) \right).
\]
Now consider $F(n)=C\left(n\frac{\log n}{\log b} + n\log b\right)$ for sufficiently large $C$. We show that $T(n)\leq F(n)$ via induction.
For the base case $n \le b$, we na\"ively have $T(n)\le O(n\log n)\le O(n\log b)\leq F(n)$. 
For $n\geq b$ we inductively assume that $T(n_i) \leq F(n_i)$, and aim to show $T(n) \leq F(n)$.  
Let $S=\sum_in_i\leq n-z$. Since $\max_in_i\leq \alpha n/b$ for some absolute constant $\alpha$, we obtain
\[\sum_in_i\log n_i\leq S\log(\alpha n/b)=S\log n - S\log  b + O(S).\]
Hence
\[\sum_iF(n_i)\leq C\left(\frac{S\log n}{\log b}-S +O\left(\frac{S}{\log b}\right) +  S\log b \right) \leq C\left(\frac{S\log n}{\log b}+ S\log b - \Omega(S)\right).\]
This together with $S\leq n-z$ and our inductive assumption that $T(n_i) < F(n_i)$ implies
\[
\sum_i T(n_i) \leq
\sum_iF(n_i) \leq 
C\left(\frac{S\log n}{\log b}+ S\log b - \Omega(S)\right)=F(n)-Cz\log b-\Omega(CS).
\]
Overall, since $n\leq S+z$, we obtain
\[T(n)\leq O(n+z\log b) + \sum_iT(n_i)\leq O(n+z\log b) + F(n)-Cz\log b - \Omega(CS)\leq F(n).\]

Choosing $b$ with $\log b=\sqrt{\log n}$ yields $T(n)\le O(n \sqrt{\log n})$.
\end{proof}

\section{A randomised $O(n\log^{\varepsilon}n)$ algorithm}

Note that Algorithm~\ref{alg:prob1} would be linear if we did not have to spend $O(\log b)$ time per $L$-bad pair. In the following we attempt to reduce this logarithmic factor by introducing more recursion and using randomisation. 
We replace the application of \Cref{cor:compute-slab-decomposition} with a recursive subroutine to compute $\Bridges(P,L_0)$ of a random sample $L_0$.
%of size $b\log b$ in (roughly) $2^{O(\sqrt{\log \log b})}n\log\log n$ time. 
%To achieve this, we cannot explicitly locate for all the bad points in which slab they lie.
We cannot explicitly locate for all points in which slab they lie.
Instead, we compute a subset $L$ of $L_0$ as slabs on the fly.
We split the point set into good points, for which we know in which slab they live, and a set of bad points. We preprocess good points  by grouping them into small groups and computing the convex hull of each group. We then recurse on $L$. That is, we compute $\Bridges(P,L)$, with the knowledge that $P$ consists of a set of yet-to-be-located points, and a set of well-behaved small convex hulls that are each constrained to a slab in $\Gamma(L)$. After this first recursive call, we have correctly computed $\Bridges(P,L)$. From these, we can finally filter $P$ into its slabs by their index and $x$-coordinate in linear time, and then recurse in each slab $\Gamma(L)$, to compute $\Bridges(P,L_0\setminus L)$. To this end, we define the following technical subproblem:
%This is possible, since we no longer explicitly locate the slab, a bad pair goes to, and instead---on the fly---refine slabs instead.

\begin{problem}[line-guided bridge computation]
\label{prob:lineguided}
Fix a number $g$.
The input consists of an $x$-sorted set $L_0$ of at most $\lambda$ vertical lines, and a point sequence $P_0$ fulfilling the convex-hull promise, where
$P_0$ is additionally partitioned into two subsets $P$ and $Q$:
\begin{itemize}
\item $P$ is a subsequence of $P_0$, and
\item $Q$ consists of $\mu$ \emph{groups}, each of
$\le g$ points, where each group is entirely inside a slab of $\Gamma(L_0)$,
and for each group we are given the slab it is in as well as the group's upper hull.
\end{itemize}
The task is to compute $\Bridges(P_0,L_0)$.
\end{problem}

%This subproblem with a set $L_0$ of lines, and $Q=\emptyset$ simply computes $\Bridges(P_0 
%,L_0)$.

We now strengthen \Cref{detLem1} to start with a set $L_0$, from which we select a well-spread set of lines with very similar guarantees to \Cref{detLem1}, except we do not locate for our identified bad points in which slab they lie. 

\begin{lemma}\label{lem2}
Let $P$ be a sequence of $n$ points in $\R^2$ (we do not assume that it fulfils the promise). Given an $x$-sorted list $L_0$ of $\le\lambda$ vertical lines, and a parameter $b\le O(\lambda/\log\lambda)$,
we can partition $P$ into two subsets $\Pgood$ and $\Pbad$, generate a set $L\subseteq L_0$ of $O(b)$ lines, and
compute $\Pgood\cap \gamma$ for all slabs $\gamma\in\Gamma(L)$, in $O(n+\lambda)$ time, such that
\begin{itemize}
\item there are $\le\lambda/b$ lines of $L_0$ inside each slab $\gamma\in\Gamma(L)$, and
\item $\Pbad$ consists of $|\Pbad|/2$ disjoint $L$-bad pairs.
\end{itemize}
\end{lemma}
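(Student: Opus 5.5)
We adapt the incremental sweep from the proof of \Cref{detLem1}, with two changes. First, new lines are no longer placed at medians of point sets; they are always taken from $L_0$. Second, points that end up in bad pairs are simply output as $\Pbad$ and never located. Concretely, we process $P$ in index order while maintaining a set $L'\subseteq L_0$ and marking every seen point as live or dead. The invariant we keep is about lines, not points: every slab of $\Gamma(L')$ except the rightmost contains at most $\lambda/b$ lines of $L_0$. Since $L_0$ is $x$-sorted, we represent $L'$ by indices into $L_0$. For each slab we keep the list of its live points.

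When a new point $p$ arrives, we compare it with the line bounding the rightmost slab $\gamma$ of $\Gamma(L')$.
\begin{itemize}
\item If $p\in\gamma$, we mark it live and append it to $\gamma$'s list.
\item If $p\notin\gamma$ and $\gamma$ contains a live point $q$, then $(p,q)$ is an $L'$-bad pair: $p$ has the larger index but lies strictly left of a line of $L'$ that $q$ lies right of. We mark both dead, move them to $\Pbad$, and, if $\gamma$ now has no live points, pop rightmost lines off $L'$ until the rightmost slab has a live point again. As in \Cref{detLem1}, popped lines are kept in a final set $L$, so every pair we recorded stays $L$-bad.
\item If $p\notin\gamma$ and $\gamma$ has no live point, there is no partner for $p$. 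To rule this out, we maintain that the rightmost slab always contains a live point whenever any live point exists, exactly as in \Cref{detLem1}. After popping we restore this, and if no live point remains at all, $L'$ becomes empty and $p$ falls into the rightmost slab.
\end{itemize}

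New lines are created from the rightmost slab. When it contains more than $\lambda/b$ lines of $L_0$, we split it by adding to $L'$ the line of $L_0$ that is $\lambda/b$ positions to the right of its left boundary. Repeating this shrinks the rightmost slab, and each insertion leaves a closed-off slab containing exactly $\lambda/b$ lines. However, the new line must not lie to the right of the live points, because the closed-off slab will be left of the live points while the new rightmost slab is the one that must contain them. So we only split at the boundary of the last $\lambda/b$-block lying left of the leftmost live point in $\gamma$. Since $L_0$ is sorted, the needed index is found by binary search in $O(\log\lambda)$ time, or by a pointer walk. This ensures that every slab left of the rightmost one has at most $\lambda/b$ lines.

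The accounting is as follows. Each line enters $L'$ at most once, and deleted lines are never reinserted, since their $L_0$-indices have been passed. Hence $|L|=O(\lambda)$ in the worst case, which is too many. The assumption $b\le O(\lambda/\log\lambda)$ suggests that each split may cost $O(\log\lambda)$ and that there should be $O(b)$ splits in total. This is the main obstacle. The difficulty is that pops followed by re-splits could create many distinct lines. To bound them, I would allow insertions only at multiples of $\lambda/b$ in $L_0$, i.e.\ at a fixed grid $G\subseteq L_0$ of $b$ lines. Then every inserted line belongs to $G$, so $|L|\le b$ automatically, and each slab of $\Gamma(L)$ contains at most $\lambda/b$ lines of $L_0$ provided $L=G$. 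The simplest way to achieve this is to take $L:=G$ up front. Then locating a point needs only an $O(1)$ comparison with the current rightmost boundary, plus a pointer that moves right monotonically through $G$, which costs $O(b)$ total. Points that move the pointer must first have their slab determined, which a merge-like walk does in amortized $O(1)$ time. The only remaining cost is for points falling left of the current pointer, and those are exactly the ones paired as bad.

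The final step mirrors the end of the proof of \Cref{detLem1}. Every live point remaining at the end is good, and its slab in $\Gamma(L)$ is known from when it was inserted, because $L=G$ is fixed and is never refined afterwards. Walking the per-slab lists then gives $\Pgood\cap\gamma$ for all slabs in $O(n+b)$ time. Adding the $O(\lambda)$ cost of scanning $L_0$ to build $G$, the total is $O(n+\lambda)$.
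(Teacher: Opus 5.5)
Your final scheme has a genuine gap in the running-time argument. It locates every live point in $\Gamma(G)$ at the moment it is inserted, and the amortised $O(1)$ bound for this relies on a pointer that only moves right. Your third bullet, however, needs popping (moving the rightmost boundary left) so that every point falling left has a live partner, and the two requirements are incompatible. Take a first point $a$ in the leftmost slab of $\Gamma(G)$ and then repeat: a point $s_i$ in an arbitrary slab $k_i>1$, immediately followed by a point $t_i$ in the leftmost slab. Each $t_i$ pairs with $s_i$, after which $a$ is again the only live point.

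If the pointer never moves left, it rests at slab $k_1$ after the first pair. Then an $s_2$ with $k_2<k_1$ falls left of the pointer, but the only live point $a$ lies to its left, so $s_2$ has no partner. Hence your claim that the points left of the pointer are exactly the bad ones fails. If instead you pop back to $a$'s slab, each $s_i$ lies right of the boundary in any of $b-1$ slabs. It must then be located on insertion, which costs $\Omega(\log b)$ comparisons in the worst case (or $\Theta(k_i)$ by walking). Every $s_i$ is subsequently paired away, so nothing pays for this cost. The total is $\Theta(n\log b)$, i.e.\ $\Theta(n\log n)$ for $\lambda=\Theta(n)$ and $b=\Theta(\lambda/\log\lambda)$, not $O(n+\lambda)$. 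Fixing $L=G$ is not itself the problem, since the $b$ quantiles of $L_0$ must be in $L$ anyway; locating points eagerly is.

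What is missing is to \emph{defer} location and to balance slabs by live points rather than by lines. In the paper, a new point is compared only with the left boundary of the wide rightmost slab of $\Gamma(L')$ and stays unlocated there. When that slab reaches $2n/b$ live points, the paper does the following:
\begin{itemize}
\item it computes the median of these points;
\item it finds the slab $\gamma_m\in\Gamma(L_0)$ containing the median by binary search in $O(\log\lambda)$ time;
\item it marks the live points inside $\gamma_m$ as located (their $\Gamma(L)$-slab is then known because $L\subseteq L_0$);
\item it pushes the left boundary of $\gamma_m$ onto $L'$.
\end{itemize}
There are $O(b)$ such splits, each costing $O(n/b+\log\lambda)$, and this is exactly where $b\le O(\lambda/\log\lambda)$ is used. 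At the end one takes $L=L''\cup L'''$, where $L''$ is all lines ever pushed and $L'''$ is the $b$ quantiles of $L_0$. Every slab of $\Gamma(L')$ holds $O(n/b)$ live points, so comparing each surviving live point against the lines of $L$ inside its $L'$-slab costs $O((n/b)|L|)=O(n)$ in total. Points that end up in bad pairs are never located at all.
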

\begin{proof}
We incrementally iterate over all points in $P$ in index order and maintain a set $L'\subseteq  L_0$ of lines.
Initially, $L'=\emptyset$.

Each point that has been seen is marked \emph{live}, \emph{dead}, or \emph{located}.
We maintain the invariant that
each slab of $\Gamma(L')$ contains at most $n/b$ live points, except the rightmost slab,
which contains strictly between 0 and $2n/b$ live points, unless there are $0$ live points. As our algorithm splits slabs that contain too many live points, it may be impossible to maintain this invariant. This occurs in particular, when there is a too-full slab that is defined by two consecutive lines in $L_0$,  which therefore cannot be split by adding to $L'$ more lines from $L_0$. We then mark points in this slab as located instead. 

As we iterate over $P$, we maintain $L'$ and also the list of all live points inside each slab of $\Gamma(L')$. When we iterate over the next unseen point $p$, we mark $p$ as live and do the following:
\begin{itemize}
\item Case 1: $p$ is in the rightmost slab $\gamma$ of $\Gamma(L')$.
If, after adding $p$, $\gamma$ now has $2n/b$ live points,
we compute the median $x$-coordinate of these $2n/b$ points 
and find the slab $\gamma_m\in\Gamma(L_0)$ containing this median.
Say $\gamma_m$ has as its boundary $\ell_m^-$ and $\ell_m^+$.
We change the marks of all live points inside $\gamma_m$ to \emph{located}.
We add $\ell_m^-$ to $L'$: this splits $\gamma$ into at most three sub-slabs, where at most two contain any live points (at most $n/b$ live points each), restoring the invariant.
This takes $O(n/b + \log\lambda)$ time, and is done a total of $O(b)$ times.

\item Case 2: $p$ is not in the rightmost slab $\gamma$ of $\Gamma(L')$.
Let $q$ be any live point in $\gamma$.
Then $(p,q)$ is an $L'$-bad pair, and we change the marks of $p$ and $q$ to dead.
If $\gamma$ now has 0 live points, we delete the rightmost line from $L'$,
repeatedly until the rightmost slab has at least 1 live point, restoring the invariant.
\end{itemize}

At the end, 
let $L''$ be all the lines that have ever been created ($L''$ contains $L'$ but also includes lines that have been deleted from $L'$).  Then $|L'|,|L''| \in O(b)$. 
Also let $L'''$ be the $b$ quantiles of $L_0$.
Define $L=L''\cup L'''$.  Sort $L$ in $O(b\log b)$ time.
All the pairs of dead points discovered are $L''$-bad and thus $L$-bad.
Let $\Pgood$ be all the live points and all the located points, and $\Pbad$ be all the dead points.

We know the list of all live points inside each slab of $\Gamma(L')$.
We can also generate the list of all live points inside each slab of $\Gamma(L)$. 
We know for each $\ell \in L \setminus L'$ which slab $\beta \in \Gamma(L')$ it lies in.  
For a fixed slab $\beta \in \Gamma(L')$, we consider all $\ell \in  L \setminus L'$ that lie in $\beta$ and we compare the $O(n/b)$ live points in $\beta$ to these lines to identify which slab of $\Gamma(L)$ they lie in. 
Letting $n_\beta$ be the number of lines of $L \setminus L'$ in $\beta$, this takes
time $O(\sum_\beta (n/b)n_\beta)=O((n/b)|L|) = O(n)$.

On the other hand, the located points lie in $O(b)$ slabs of $\Gamma(L_0)$.  For each located point, we know per construction  the slab of $\Gamma(L_0)$ containing it. In $O(b\log b)$ time, we can pre-determine the
slab of $\Gamma(L)$ containing each of these $O(b)$ possible slabs of $\Gamma(L_0)$.
Thus, we know the slab of $\Gamma(L)$ containing each located point. 

It follows that we know $\Pgood\cap\gamma$ for every $\gamma\in\Gamma(L)$.
\end{proof}

With this algorithmic tool that coarsens $L_0$ to $L$, but also quickly splits the points into located and bad points, we can now solve the line-guided bridge computation problem  (Problem~\ref{prob:lineguided}) in $n\cdot2^{O(\sqrt{\log\log\lambda})}$ time via a recursive method that is outlined in Algorithm \ref{alg:prob2}.  

\begin{figure}
    \centering
    \includegraphics[width=\linewidth]{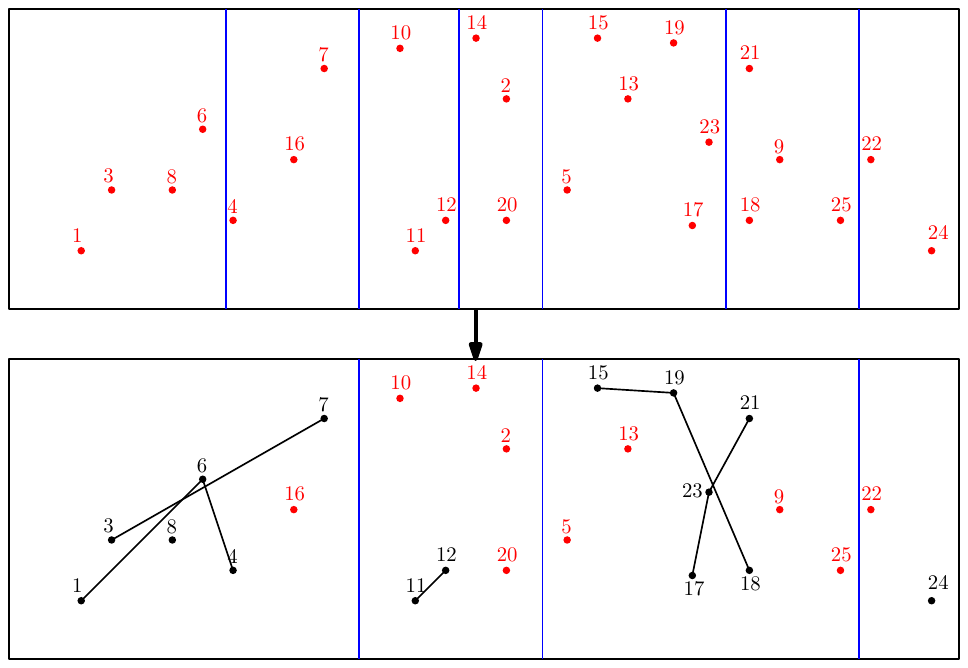}
    \caption{The recursive call from Line 4 of \Cref{alg:prob2}. We obtain a subset of lines, the good (black) points are split into groups and we compute their upper hulls.  For (red) points we do \emph{not} know in which slab they live. From the recursive call we get the bridges of the three blue lines.}
    \label{fig:lambdaTree}
\end{figure}

\begin{algorithm}
\caption{Line-guided bridge computation}\label{alg:prob2}
\begin{algorithmic}[1]
\Require Set $L_0$ of $\lambda$ lines, sequence $P$ of $n$ points, $\mu$ groups $\{Q_i\}$, and parameters $b$ and $g$
\Ensure $\Bridges(P\cup\bigcup_iQ_i,L_0)$
\State If $\lambda\leq O(1)$ compute $\Bridges(P\cup\bigcup_iQ_i,L_0)$ via 2D LPs over $n+\mu$ convex $g$-gons.
\State Via \Cref{lem2}, obtain $\Pgood,\Pbad\subseteq P$, $L\subseteq L_0$, and
$\Pgood\cap\gamma$ for all slabs $\gamma\in\Gamma(L)$.
\State For each $\gamma\in \Gamma(L)$, divide the points in $\Pgood\cap\gamma$
into groups of size $g$ (except for one leftover group of size $<g$), and
compute the upper hull of each with Graham's scan.
\State Recursively compute $\Bridges(P\cup Q,L)=\Bridges(\Pbad\cup (\Pgood\cup Q), L)$,
where $\Pgood\cup Q$ is partitioned into $\mu+O(n/g+b)$ groups of size $g$, each contained in a slab of $\Gamma(L)$.
\For{$\gamma\in\Gamma(L)$ with left side $\ell^-\in L$ and right side $\ell^+\in L$}
\State Let $v^-$ be the right endpoint of $\bridge(P,\ell^-)$.
\State Let $v^+$ be the left endpoint of $\bridge(P,\ell^+)$.
\State Let $P^{(\gamma)}=\{p\in P\cap\gamma\mid\text{$\IND(v^-)\leq\IND(p)\leq\IND(v^+)$ \& $x(v^-)\leq x(p)\leq x(v^+)$}\}$.
\State For every $Q_i\in\gamma$ let $Q^{(\gamma)}_i=\{q\in Q_i\cap\gamma\mid x(v^-)\leq x(q)\leq x(v^+)\}$. 
\State Let $L_0^{(\gamma)}$ be the lines of $L_0$ inside $\gamma$.
\State Recursively compute $\Bridges(P^{(\gamma)}\cup\{Q_i^{(\gamma)}\},L_0^{(\gamma)})$.
\EndFor
\addtocounter{linenumber}{-1}
\State Concatenate and return $\Bridges(P^{(\gamma)}\cup\{Q_i^{(\gamma)}\},L_0^{(\gamma)})$ and $\Bridges(P\cup Q,L)$.
\end{algorithmic}
\end{algorithm}

\begin{lemma}\label{lem:subproblem}
    Algorithm \ref{alg:prob2} computes $\Bridges(P_0,L_0)$ in time
    \[
    2^{O(\sqrt{\log\log\lambda})} \bigl(n + (\mu+n/g+\lambda g)\log\lambda\bigr)\log^2 g.
    \]
\end{lemma}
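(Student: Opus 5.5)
Correctness and running time are handled separately.

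\textbf{Correctness.} We induct on $\lambda$; the base case is the 2D LP over the points of $P$ and the convex $g$-gons. The first recursive call (Line 4) is valid because the grouping condition of Problem~\ref{prob:lineguided} holds for the new instance. The groups of $Q$ lie in slabs of $\Gamma(L_0)$, hence in slabs of the coarser $\Gamma(L)$, and the new groups from $\Pgood$ lie in slabs of $\Gamma(L)$ by \Cref{lem2}. The point set is still $P_0$, so the promise is preserved, and the call returns $\Bridges(P_0,L)$. For the loop, \Cref{lem:slabs-correct} shows that points of $P\cap\gamma$ outside $P^{(\gamma)}$ are not on $\UH(P_0)$. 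Points of $Q_i$ outside the $x$-range $[x(v^-),x(v^+)]$ lie under a bridge, so they are not on $\UH(P_0)$ either. Restricting to the interior of $\gamma$, the upper hull of $P_0$ between $v^-$ and $v^+$ equals $\UH$ of the surviving points. Hence the bridges across lines of $L_0^{(\gamma)}$ coincide with those returned recursively. The promise holds for the sub-instance, since a subsequence of the hull vertices remains ordered. The upper hulls of the truncated groups $Q_i^{(\gamma)}$ are obtained from those of $Q_i$ by cutting in $O(\log g)$ or linear time.

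\textbf{Running time.} Let $\lambda$ shrink in the recursion. Choose $b$ so that $\log b \approx \sqrt{\log\log\lambda}\cdot\log\lambda/\log\log\lambda$, or more simply pick $b=\lambda/\log\lambda$ roughly. The first call then has $O(b)$ lines, and each slab call has $\le\lambda/b$ lines. The nonrecursive work per node is $O(n+\lambda)$ for \Cref{lem2} and $O(n\log g)$ for the Graham scans on groups. We also need the bridges at the base, and the filtering is linear. The first recursive call has $|\Pbad|$ loose points, $\mu'=\mu+O(n/g+b)$ groups and $O(b)$ lines. By the pruning argument from the proof of \Cref{thm:det}, the slab calls together receive at most $n-|\Pbad|/2$ loose points and $\mu+O(b)$ groups. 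Truncation creates at most a constant number of extra pieces per slab, so each group is split into only $O(1)$ parts. The key accounting is that every loose point that becomes bad pays for itself: it is charged $\log\lambda$ through the $(\mu+n/g)\log\lambda$-type term after being grouped, or it reappears in the recursion on $L$ with $\lambda$ replaced by $b$. I would set up a potential $F(n,\mu,\lambda)=C(\lambda)\bigl(n+(\mu+n/g+\lambda g)\log\lambda\bigr)\log^2 g$ and verify $T\le F$ by induction, exactly like the inductive proof of \Cref{thm:det}. The drop of $|\Pbad|/2$ in the loose point count must compensate for the bad points appearing again in the Line~4 call, where they cost only $\log b<\log\lambda$ each. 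The term $\lambda g\log\lambda$ absorbs the $O(b)$ new groups per level summed over the slab subproblems, whose $\lambda$-values add up to $\lambda$.

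\textbf{Main obstacle.} The hard part is choosing the parameter $b$ (as a function of $\lambda$) so that the two-branch recursion, with one call on $O(b)$ lines and many calls on $\lambda/b$ lines, makes the multiplicative blow-up $C(\lambda)$ satisfy $C(\lambda)\le \max(C(b),C(\lambda/b))\cdot(1+\cdot)$ plus an additive constant, giving $2^{O(\sqrt{\log\log\lambda})}$. I would try $\log b=\log\lambda/2^{\sqrt{\log\log\lambda}}$, i.e.\ splitting $\log\log\lambda$ by additive steps of $\sqrt{\log\log\lambda}$. With that choice the recursion depth along the $L$-branch is about $\sqrt{\log\log\lambda}$, each level costing a constant factor, while the slab branch reduces $\log\lambda$ only slightly but incurs no factor on points already charged. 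The $\log^2 g$ factor comes from the LP and tangent queries on $g$-gons at $O(\log g)$ each, plus the Graham scan cost.
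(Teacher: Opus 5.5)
Your correctness argument is fine, and more detailed than the paper's. The gap is in the running time. The inductive step $T\le F$ is only announced, and the accounting you sketch for it is not what makes it close.

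In your potential $F=C(\lambda)\bigl(n+(\mu+n/g+\lambda g)\log\lambda\bigr)\log^2 g$, a loose point is charged $C(\lambda)(1+\log\lambda/g)\log^2 g$. The $\log\lambda$ enters only through $n/g$, which is negligible once $g\ge\log\lambda$; this happens in deeper subproblems, since $g$ is fixed while $\lambda$ shrinks. Also, bad points are never grouped, so they are not paid for through the $(\mu+n/g)\log\lambda$ term.

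Line~4 receives all $2z$ bad points, while the slab calls lose only $z$. So the loose-point terms need $2z\,C(O(b))+(n-z)\,C(\lambda/b)+O(n)\le n\,C(\lambda)$ for all $0\le z\le n/2$. Checking $z=n/2$ and $z=0$, this means $C(O(b))+\tfrac12C(\lambda/b)+O(1)\le C(\lambda)$ and $C(\lambda/b)+O(1)\le C(\lambda)$. Since $C(\lambda/b)$ is close to $C(\lambda)$, the multiplier must shrink by a constant factor (about $2$) on the $L$-branch, while the slab branch may lose only an additive constant. The inequality $\log b<\log\lambda$ says nothing about this. Your form $C(\lambda)\le\max(C(b),C(\lambda/b))(1+\cdot)+O(1)$ is also the wrong shape: any constant factor on the slab branch, which has depth $\log\lambda/\log b$, is fatal.

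This condition is what determines $b$, and two of your three candidates violate it.
\begin{itemize}
\item For $b=\lambda/\log\lambda$, each $L$-step lowers $\log\lambda$ by only about $\log\log\lambda$, yet must add $\Omega(1)$ to $C$. This forces $C(\lambda)=\Omega(\log\lambda/\log\log\lambda)$.
\item For $\log b\approx\sqrt{\log\log\lambda}\,\log\lambda/\log\log\lambda$, each $L$-step lowers $\log\log\lambda$ by only $\tfrac12\log\log\log\lambda$. The required constant-factor drops then force $C(\lambda)=2^{\Omega(\log\log\lambda/\log\log\log\lambda)}$.
\end{itemize}
Your last choice, $\log b=\log\lambda/2^{\sqrt{\log\log\lambda}}$, does work. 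Take $x=\log\log\lambda$ and $C(\lambda)=K\,2^{4\sqrt x}$.
\begin{itemize}
\item The bound $\sqrt{x-\sqrt x}\le\sqrt x-\tfrac12$ gives $C(O(b))\le(\tfrac14+o(1))C(\lambda)$.
\item A slab step lowers $x$ by at least $2^{-\sqrt x}$, so it raises $C$ by $\Omega(K2^{3\sqrt x}/\sqrt x)$.
\item The $(\mu+n/g+\lambda g)\log\lambda$ terms telescope via $\log b+\log(\lambda/b)=\log\lambda$. The $O(1)$ in $\log O(b)$ is absorbed by $C(O(b))\le C(\lambda)/4$.
\end{itemize}
But this verification is the substance of the lemma, and your proposal neither performs it nor states the right inequality.

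The paper avoids a closed form for $C$ by bootstrapping. It uses the previous bound $c_k(n\log^{1/k}\lambda+\cdots)$ only for the Line~4 call, and unrolls the slab recursion over $\log\lambda/\log b$ levels. The resulting $2z\,c_k\log^{1/k}b$ cost is charged to the $z$ pruned points, each of which is pruned once. Choosing $\log b=\log^{k/(k+1)}\lambda$ gives $c_{k+1}=O(c_k)$, and the paper finishes with $k=\sqrt{\log\log\lambda}$.
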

\begin{proof}
    Like \Cref{alg:prob1}, correctness follows from \Cref{lem:slabs-correct}; we recursively compute the bridges of only the safe subsets of $P_\gamma$ via the sets $P^{(\gamma)}$ and the vertices of the hulls of $\{Q_i^{(\gamma)}\}$.

    For the running time, applying \Cref{lem2} at Line 2 takes $O(n+\lambda)$ time. Computing the convex hulls for every group at Line 3 takes $O(g\log g)$ time per group via e.g.\ Graham's scan, and there are $O(n/g+b)$ such groups, so the total time is $O((n+bg)\log g)$. Line 4 is a recursive call with the recursive input $P$ consisting of only the left-over bad point pairs, and $Q$ being $\mu + O(n/g+b)$ groups of size $g$ that each are contained in a slab of $\Gamma(L)$ (refer to \Cref{fig:lambdaTree}). In the \textbf{for}-loop, points in $P\setminus \bigcup_\gamma P^{(\gamma)}$ are effectively pruned, by the same argument as in \Cref{cor:compute-slab-decomposition}.
    Letting $|\Pbad|=2z$, we thus have $\sum_\gamma |P^{(\gamma)}|\le n-z$.

    The running time thus satisfies the following recurrence for $\lambda>b\log b$:
    \begin{align*}
        T(n,\mu,\lambda)\ &\le\ \max_{\scriptsize\begin{array}{c}z,n_1,n_2,\ldots,\mu_1,\mu_2,\ldots:\\\sum_i n_i\le n-z\\\sum_i \mu_i\le \mu\end{array}}
    \Bigg( O((n+bg+\mu)\log g + \lambda)\\
    &\qquad\qquad\qquad+T(2z,\mu + O(n/g+b), O(b)) + \sum_i T(n_i,\mu_i,\lambda/b) \Bigg).
    \end{align*}
    For base case $\lambda=O(1)$, then $T(n,\mu,O(1))\le O((n+\mu)\log^2 g)$, since bridge-finding dualizes to 2D linear programming (LP), solvable in $O(k\log g\log\log g)\le O(k\log^2 g)$ time for $k$ convex $g$-gons \cite{CHAN1998147}.
    
    To bound $T(n,\mu,\lambda)$, we first choose $b=2$, for which the recurrence gives $T(n,\mu,\lambda) \le O((n+bg+\mu+\lambda)\log\lambda\log^2g)$.  In the following, we will adjust $b$ (and $\lambda$) recursively.    

Assume inductively that $T(n_i,\mu_i,\lambda) \le c_k \bigl(n_i\log^{1/k}\lambda +
(\mu_i+n_i/g+\lambda+bg)\log\lambda\bigr)\log^2 g$ for each $n_i,\mu_i$.
Then for $\lambda > b\log b$, we obtain
\begin{align*}
    T(n,\mu,\lambda)\ &\le\ \max_{\scriptsize\begin{array}{c}z,n_1,n_2,\ldots,\mu_1,\mu_2,\ldots:\\\sum_i n_i\le n-z\\\sum_i \mu_i\le \mu\end{array}}
  \Bigg( \sum_i T(n_i,\mu_i,\lambda/b) \\
  &\qquad\qquad\qquad+ O\bigl(c_k (n + z\log^{1/k}b + (\mu+n/g+\lambda+bg)\log b)\log^2 g\bigr) \Bigg).
\end{align*}
For base case $\lambda\le b\log b$, the inductive assumption becomes $T(n_i,\mu_i,\lambda)\le O\bigl(c_k (n_i\log^{1/k}b +
(\mu_i+n/g+\lambda+bg)\log\lambda)\log^2 g\bigr)$. %; since  
%the recurrence has depth $O(\frac{\log \lambda}{\log b})$ it solves to
%

%Since the recurrence has depth $O\!\left(\frac{\log \lambda}{\log b}\right)$, and each level contributes an additional $O(c_k\,n\log^2 g)$ term, this incurs a total extra $O\!\left(c_k\,n\frac{\log\lambda}{\log b}\log^2 g\right)$
%over the whole recurrence tree. On the other hand, the $z\log^{1/k} b$ term is not paid independently on every level: the contributions from all the $z$'s at internal nodes, together with the $n_i$'s at the leaves, charge each point in $P$ only $O(1)$ times in total, so these terms sum to only $O\!\left(c_k\,n\log^{1/k} b\,\log^2 g\right)$.
%Also, on any fixed level of the recurrence tree, the slab subproblems have disjoint line sets, so the sum of their $\lambda$-parameters is at most $\lambda$; in particular, there are only $O(\lambda/b)$ such subproblems on that level, and hence the total $bg$-charge on one level is $O(\lambda g)$. Likewise, since at each split we have $\sum_i \mu_i \le \mu$ and $\sum_i n_i \le n$, the total value of $\mu_i+n_i/g$ over all subproblems on any fixed level is at most $\mu+n/g$. Therefore all of these terms contribute only
%\[
%O\!\left(c_k(\mu+n/g+\lambda+\lambda g)\log\lambda\log^2 g\right)
%=O\!\left(c_k(\mu+n/g+\lambda g)\log\lambda\log^2 g\right)
%\]
%over the whole recurrence tree.

Since the recurrence has depth $O\!\left(\frac{\log \lambda}{\log b}\right)$, the per-level additive $O(c_k\,n\log^2 g)$ term contributes a total of $O\!\left(c_k\,n\frac{\log\lambda}{\log b}\log^2 g\right)$ over the whole recurrence tree. The $z\log^{1/k}b$ term does not incur such a depth factor: the contributions from all internal $z$'s together with the leaf $n_i$'s charge each point of $P$ only $O(1)$ times overall, and thus sum to $O\!\left(c_k\,n\log^{1/k}b\,\log^2 g\right)$.
Moreover, on each level the slab subproblems have disjoint line sets, so their $\lambda$-parameters sum to at most $\lambda$, and therefore their total $bg$-charge is $O(\lambda g)$. Likewise, since $\sum_i \mu_i\le \mu$ and $\sum_i n_i\le n$, the total value of $\mu_i+n_i/g$ on any fixed level is at most $\mu+n/g$. Hence these remaining terms contribute $O\!\left(c_k(\mu+n/g+\lambda g)\log\lambda\log^2 g\right)$
over the whole recurrence tree.
%Also, since at each split we have $\sum_i \mu_i \le \mu$ and $\sum_i n_i \le n$, the total value of $\mu_i+n_i/g$ over all subproblems on any fixed recursion level is at most $\mu+n/g$. Hence these terms contribute only $O\!\left(c_k(\mu+n/g)\log\lambda\log^2 g\right)$ over all $O\!\left(\frac{\log\lambda}{\log b}\right)$ levels.
%Likewise, the $\mu$-term and the other additive quantities inside the inductive bound are distributed among the children rather than duplicated, so their total on any fixed level remains bounded by the original instance and therefore sums to only $O\!\left(c_k(\mu+n/g+\lambda+bg)\log\lambda\log^2 g\right)$. 
Thus, overall, the recurrence solves to
\[
T(n,\mu,\lambda)\ \le\ O\left(c_k \left(n\tfrac{\log\lambda}{\log b} + n\log^{1/k}b + (\mu+n/g+\lambda g)\log\lambda\right)\log^2 g\right).
\]
Choosing $b$ with $\log b = \log^{k/(k+1)}\lambda$,
we get
\[
T(n,\mu,\lambda) \le O\bigl(c_k (n\log^{1/(k+1)}\lambda + (\mu+n/g+\lambda g)\log\lambda)\log^2 g\bigr).
\]
Thus, we can set $c_{k+1} = O(1)\cdot c_k$, i.e., $c_k=2^{O(k)}$, for any $k$. With $k=\sqrt{\log\log\lambda}$,
we obtain
%\[
%T(n,\mu,\lambda)\le 2^{O(k)} \bigl(n\log^{1/k}\lambda + (\mu+n/g+\lambda g)\log\lambda\bigr)\log^2 g
%\]
%for any $k$.
%Choosing $k=\sqrt{\log\log\lambda}$, we obtain
\[
T(n,\mu,\lambda)\le 2^{O(\sqrt{\log\log\lambda})} \bigl(n + (\mu+n/g+\lambda g)\log\lambda\bigr)\log^2 g.\qedhere
\]
\end{proof}

\begin{algorithm}
\caption{A randomised $n\cdot 2^{O(\sqrt{\log\log n})}$-time convex hull algorithm with a promise}\label{alg:prob3}
\begin{algorithmic}[1]
\Require Sequence $P$ of $n$ points, and fixed parameters $\lambda$ and $g$
\Ensure $\UH(P)$

\State Let $L$ be the vertical lines at a random sample of $\lambda$ $x$-coordinates of $P$. Sort $L$.
\State Compute $\Bridges(P,L)$ via \Cref{alg:prob2} (with $P$, $Q=\emptyset$, $\lambda$, and $g$).
\For{$\gamma\in\Gamma(L)$ with left side $\ell^-\in L$ and right side $\ell^+\in L$}
\State Let $v^-$ be the right endpoint of $\bridge(P,\ell^-)$.
\State Let $v^+$ be the left endpoint of $\bridge(P,\ell^+)$.
\State Let $P^{(\gamma)}=\{p\in P\cap\gamma\mid\text{$\IND(v^-)\leq\IND(p)\leq\IND(v^+)$ \& $x(v^-)\leq x(p)\leq x(v^+)$}\}$.
\State Compute $\UH(P^{(\gamma)})$ via Graham's scan.
\EndFor
\addtocounter{linenumber}{-1}
\State Concatenate $\{\UH(P^{(\gamma)})\}$ and $\Bridges(P,L)$ and return $\UH(P)$.
\end{algorithmic}
\end{algorithm}

\begin{theorem}\label{thm:prob}
    \Cref{alg:prob3} computes the upper hull of an input sequence $P$ that fulfils the convex hull promise in expected time $n\cdot2^{O(\sqrt{\log\log n})}$. % (which is $n\log^{o(1)}n$).
\end{theorem}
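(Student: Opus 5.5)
Correctness of \Cref{alg:prob3} follows exactly as for \Cref{alg:prob1}. \Cref{lem:subproblem} returns $\Bridges(P,L)$ correctly: it is invoked with $Q=\emptyset$, so $P_0=P$ fulfils the promise. By \Cref{lem:slabs-correct}, no point of $P\cap\gamma\setminus P^{(\gamma)}$ lies on $\UH(P)$. The upper hull is then the concatenation of the bridges across the lines of $L$ and the hulls $\UH(P^{(\gamma)})$ inside the slabs. One small point needs checking: a point of $P^{(\gamma)}$ that is a vertex of $\UH(P^{(\gamma)})$ is a vertex of $\UH(P)$ restricted to the $x$-range $[x(v^-),x(v^+)]$. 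This holds because $v^-$ and $v^+$ are hull vertices, the portion of $\UH(P)$ between them is spanned by points with $x$-coordinates in that range, and those points (by the promise) also have indices between $\IND(v^-)$ and $\IND(v^+)$, so they survive the filter.

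For the running time I will bound the three parts separately. First, sampling $\lambda$ $x$-coordinates and sorting them costs $O(n+\lambda\log\lambda)$. Second, the call to \Cref{alg:prob2} with $\mu=0$ costs, by \Cref{lem:subproblem},
\[
2^{O(\sqrt{\log\log\lambda})}\bigl(n+(n/g+\lambda g)\log\lambda\bigr)\log^2 g .
\]
Third, the filtering in the loop is linear. The Graham scans cost $\sum_{\gamma}O(|P\cap\gamma|\log|P\cap\gamma|)$, since $|P^{(\gamma)}|\le|P\cap\gamma|$ and we may first sort each $P^{(\gamma)}$ by $x$. Standard random-sampling analysis gives $\mathbb{E}\bigl[\sum_\gamma |P\cap\gamma|\log|P\cap\gamma|\bigr]=O(n\log(n/\lambda))$. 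Concretely, the number of points in a given slab of a random $\lambda$-sample of $n$ ranks is geometrically distributed with mean $O(n/\lambda)$; charging each point to its slab and using concavity of $\log$ together with the exponential tail bounds the expectation.

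It remains to choose parameters so that every term is $n\cdot 2^{O(\sqrt{\log\log n})}$. I take $g=\log n$, so $\log^2 g=O((\log\log n)^2)=2^{O(\sqrt{\log\log n})}$ and $n\log\lambda/g=O(n)$. I take $\lambda=n/\log^2 n$, so $\lambda g\log\lambda=O(n)$ and $\lambda\log\lambda=O(n)$. The Graham-scan term is then $O(n\log(n/\lambda))=O(n\log\log n)$. Since $\log\log\lambda\le\log\log n$, the total expected time is $n\cdot 2^{O(\sqrt{\log\log n})}$.

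The main obstacle I expect is the expectation bound on the slab sizes, specifically justifying $\mathbb{E}[\sum|P\cap\gamma|\log|P\cap\gamma|]=O(n\log(n/\lambda))$ cleanly. If sampling is with replacement, I would instead sample each point independently with probability $\lambda/n$. I would then argue that for any point $p$, the probability that $p$'s slab contains more than $t\,n/\lambda$ points is at most $2(1-\lambda/n)^{t n/(2\lambda)}\le 2e^{-t/2}$, because one of the two runs of $tn/(2\lambda)$ consecutive $x$-ranked points on either side of $p$ must contain no sample. Summing over $p$ and $t$ gives the bound. Note that the randomness enters only here: the bound from \Cref{lem:subproblem} is deterministic.
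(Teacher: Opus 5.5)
Your proof is correct and follows the paper's argument. It uses the same split into the call to \Cref{alg:prob2}, bounded by \Cref{lem:subproblem} with $\mu=0$, plus the per-slab Graham scans, with the same parameters $\lambda=n/\log^2 n$ and $g=\log n$. The only real difference is how the Graham-scan term is bounded: the paper uses a w.h.p.\ bound $|P^{(\gamma)}|=O((n/\lambda)\log n)$ on every slab, whereas you bound its expectation directly via the exponential tail and concavity of $\log$, getting $O(n\log(n/\lambda))$, which matches the ``expected time'' claim without a separate failure-probability step.
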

\begin{proof}
    W.h.p.\ $|P^{(\gamma)}|\le O((n/\lambda)\log n)$ for all $\gamma\in\Gamma(L)$.
    Thus, w.h.p.\ the total running time is $2^{O(\sqrt{\log\log \lambda})}(n+(n/g+\lambda g)\log \lambda)\log^2 g + O(n\log ((n/\lambda)\log n))$. Setting $\lambda=n/\log^2 n$ and $g=\log n$, we obtain a final running time of $2^{O(\sqrt{\log\log n})}n\cdot\log^2\log n = n2^{O(\sqrt{\log\log n})}$.
\end{proof}

\subparagraph*{Remarks.}
The $n\cdot 2^{O(\sqrt{\log\log n})}$ time complexity
(which is $o(n\log^\varepsilon n)$ for any $\varepsilon>0$) of our algorithm is unusual and highlights its intriguing ``double'' recursion.  We remark that similar bounds have appeared before in the computational geometry literature, 
for example, concerning data structures for dynamic planar convex hulls~\cite[Section~2]{Chan12} and word-RAM algorithms for three-dimensional convex hulls~\cite{ChanP07}.  Both of these previous works encounter a similar doubly recursive phenomenon, but the details are completely different.

Derandomization of our $n\cdot 2^{O(\sqrt{\log\log n})}$ algorithm is left as an open question.  The only place where randomization is used is in line~1 of \Cref{alg:prob3}; in particular, \Cref{alg:prob2} for computing $\Bridges(P,L)$ is already deterministic.
Our algorithm can also be made output-sensitive, with $n\cdot 2^{O(\sqrt{\log\log h})}$ running time for output size~$h$, and it can easily be adapted to the Pareto Front problem with the same running time.  See \Cref{app:pareto} for details.

%\subparagraph{Pareto fronts and other remarks.}

%Our algorithms can easily be adapted to the Pareto Front problem with the same running times; see \Cref{app:pareto} for details and other remarks. 

\section{Lower Bounds}
In previous sections, we showed that the ``promise'' of the convex hull vertices occurring in order in the input is sufficient to allow for $o(n \log n)$-time algorithms to construct the convex hull or Pareto front. 
A natural follow-up question is to ask to what degree the promise can be weakened, and still allow $o(n \log n)$-time algorithms. 
We show that there is essentially no wiggle-room between our result and the $\Omega(n \log n)$ running time barrier as we show an adversarial $\Omega(n \log n)$-time lower bound for when the promise is even slightly weakened.

\begin{definition}
    An input sequence $P$ is said to fulfil the \emph{almost convex hull promise} if there exists a unique point $p^* \in P$ such that all points $p \in \mathrm{conv}(P)$ with $p \neq p^*$ fulfil the convex hull promise.
    That is, for all $p, q \in \mathrm{conv}(P) - p^*$, $x(p) > x(q)$ if and only if $\IND(p) > \IND(q)$. The \emph{almost Pareto front promise} is defined analogously.
\end{definition}

\subparagraph*{Computational model.}
We show an adaptive adversarial, comparison-based lower bound. That is, we assume that the input sequence $P$ is given by an adversary that is aware of our convex hull construction program. Our program can compare values in $P$, but not manipulate these values. The adversary is able to observe the comparisons that we make, and is able to adapt the input sequence accordingly to either force our program to last longer or become incorrect. To allow for standard geometric and structural checks, we permit two types of \emph{comparisons}: coordinate comparisons and orientation tests. A more detailed description of the model of computation is given in \Cref{app:model}.

%\subsection{A lower bound for almost promises}

\subparagraph{Sketch of our lower bound argument.}
As input we take the union of two point sequences $P$ and $Q_\pi$ of sizes $n$ and $n-1$. $P$ consists of points on the parabola that are always on the upper convex hull. $Q_\pi$ depends on a permutation $\pi \colon [n - 1] \rightarrow [n - 1]$, and point $q_j \in Q_\pi$ lies just below the $\pi(j)$'s segment of $\UH(P)$. Additionally, we define point sets $Q^j_\pi$, which are equal to $Q_\pi$ except for the point $q_j$ that in $Q^j_\pi$ lies just above the $\pi(j)$'s segment of $\UH(P)$.

The lower bound argument now has two essential ingredients: first, because the adversary can switch the input from $Q_\pi$ to $Q^j_\pi$ without violating the almost promise, a correct algorithm must verify \emph{every} point in $Q_\pi$ as not on the convex hull. Second, the only way the algorithm can verify that $q_j$ is not on the convex hull is by testing its orientation against the unique edge of $P$ lying directly above it (the edge $\overline{p_{\pi(j)}p_{\pi(j)+1}}$). But if the algorithm has performed all of these comparisons, it has sufficient information to sort the unknown permutation $\pi$, which requires $\Omega(n \log n)$ comparisons. The complete argument is in \Cref{app:lowerconstruct}.

\begin{theorem}\label{thm:lower}
    For any deterministic correct algorithm that constructs the convex hull (resp. Pareto front) input sequences of size $2n$ fulfilling the almost promise, there exists at least one sequence requiring $\Omega(n \log n)$ comparisons.
\end{theorem}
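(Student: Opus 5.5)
We follow the sketch given just before the statement and prove the convex hull case first. The Pareto front case uses the same construction on a staircase in place of a parabola. Fix $n$ and let $P=(p_1,\ldots,p_n)$ be points on the parabola $y=-x^2$ with increasing $x$-coordinates, placed first in the input sequence in this order. For a permutation $\pi$ of $[n-1]$, let $q_j$ be a point just below the midpoint of the segment $\overline{p_{\pi(j)}p_{\pi(j)+1}}$, so close that $q_j$ lies above the chords $\overline{p_{\pi(j)-1}p_{\pi(j)+1}}$ and $\overline{p_{\pi(j)}p_{\pi(j)+2}}$. Then every point other than $p_{\pi(j)}$ and $p_{\pi(j)+1}$ lies strictly on one side of the line through $q_j$ and any point. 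The input is $P$ followed by $Q_\pi=(q_1,\ldots,q_{n-1})$, which has size $2n-1$; we pad with one extra point to reach size $2n$. Every hull vertex is in $P$, which is sorted, so the promise, and hence the almost promise, holds. Let $Q^j_\pi$ be $Q_\pi$ with $q_j$ moved just above the segment. Then $q_j$ becomes a hull vertex and the only out-of-order hull vertex, so with $p^*=q_j$ this input also fulfils the almost promise, and its hull differs from that of $Q_\pi$.

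The adversary argument runs as follows. The adversary answers comparisons consistently with $Q_\pi$, keeping $\pi$ undetermined: it maintains the set $\Pi$ of permutations consistent with all answers so far. We must show that after the algorithm halts, every $q_j$ is certified as not on the hull. Otherwise some input $Q^j_{\pi}$, with $q_j$ perturbed slightly so that coordinate comparisons are unchanged, is consistent with all answers, and the output would be wrong for it. The key geometric claim is that the only comparison certifying that $q_j$ lies below the hull is the orientation test of $q_j$ against the pair $(p_{\pi(j)},p_{\pi(j)+1})$.

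To argue this claim, one shows that moving $q_j$ vertically by an arbitrarily small amount changes the outcome of no other test. This covers tests against other pairs of $P$, tests against other $q$'s, mixed triples, and coordinate comparisons. By the choice of the perturbation margins, all such outcomes are robust. The delicate part is triples involving two $q$'s that are very close together. We avoid this by giving the $q_j$ distinct, generic small offsets, so that such triples are far from degenerate.

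Consequently, a correct algorithm must, for every $j$, perform the test $(p_{\pi(j)},p_{\pi(j)+1},q_j)$ and receive the answer ``below''. Tests of $q_j$ against the wrong pair $p_k,p_{k+1}$ return ``above'', and each coordinate comparison or orientation test has $O(1)$ outcomes. The sequence of answers then determines $\pi$, because from the below-tests one reads off $\pi(j)$ for all $j$. Since $|\Pi|=(n-1)!$ at the start, the adversary can answer each query so that $\Pi$ shrinks by at most a constant factor, as in the standard decision tree argument. Hence $\Omega(\log (n-1)!)=\Omega(n\log n)$ comparisons are needed.

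I expect the main obstacle to be the robustness claim that no other comparison can certify $q_j$ as a non-hull point. In particular, orientation tests of $q_j$ against two other $q$'s, or against a $p$ and a $q$, must be shown to carry no information about whether $q_j$ lies just below or just above its segment. I would first handle this by fixing a hierarchy of perturbation scales, with $q_j$'s distance to its segment much smaller than every other gap in the construction, so that all tests other than the designated one have the same outcome for $Q_\pi$ and $Q^j_\pi$.
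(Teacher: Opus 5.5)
Your construction and switching argument match the paper: moving $q_j$ just above its edge is undetectable unless the triple $\{p_{\pi(j)},q_j,p_{\pi(j)+1}\}$ was tested. The step that turns this into $\Omega(n\log n)$ fails, however.

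You claim that testing $q_j$ against a wrong pair $p_k,p_{k+1}$ returns ``above'', so $\pi(j)$ can be read off from the answers. This is false. The line through any upper-hull edge $p_kp_{k+1}$ supports $\mathrm{conv}(P)$, and $q_j$ lies in the interior of $\mathrm{conv}(P)$. So $q_j$ is strictly below that line for \emph{every} $k$. Concretely, take $p_i=(i,-i^2)$ and $q_j$ just below $m_c=(c+\frac12,-(c+\frac12)^2-\frac14)$ with $c=\pi(j)$. Then the line through $p_k,p_{k+1}$ exceeds $y(m_c)$ by $(c-k)^2$ at $x=c+\frac12$. Hence the designated test returns the same answer as every test against a wrong edge, and the transcript need not determine $\pi$. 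Your adversary argument therefore has no reason to end with $|\Pi|=1$. The paper's proof makes the same uniqueness claim (``the unique pair of points such that $q_j$ lies under their corresponding line''), so it does not supply this step either.

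The missing idea is to count via the \emph{set} of tested triples, not via the answers. All $\pi$ remaining in the final $\Pi$ follow the same computation path. Let $T$ be the set of pairs $(k,j)$ such that some orientation test on $\{p_k,q_j,p_{k+1}\}$ occurs on that path, in any role assignment. Your correctness argument shows $(\pi(j),j)\in T$ for all $j$ and all $\pi\in\Pi$. Let $d_j=|\{k:(k,j)\in T\}|$ and let $d$ be the number of comparisons. Then $|\Pi|\le\prod_j d_j\le (d/(n-1))^{n-1}$ by AM--GM. Answering by majority keeps $|\Pi|\ge (n-1)!/2^d$. Together these give $\log (n-1)!\le d+(n-1)\log(d/(n-1))$, which forces $d=\Omega(n\log n)$.

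By contrast, the robustness claim you flag as the main obstacle is routine. With integer abscissae the midpoints $m_k$ lie on $y=-x^2-\frac14$. The only collinear triples among $P\cup\{m_k\}$ are $\{p_k,m_k,p_{k+1}\}$, and no coordinates coincide. So a uniform offset $\pm\varepsilon$ changes only the designated orientation tests. Distinct $q$'s are never close to one another, so that worry is moot. Finally, make sure your padding point does not become an out-of-order hull vertex; an interior point works.
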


\subparagraph{Negative answer to \Cref{Problem:supersequence}.}
Consider the point sequence $P \circ Q_\pi^j \circ P$. It contains a subsequence with the vertices of the convex hull in order, since we can take the points left of $q_j$ from the first $P$ and the points right of $q_j$ from the second $P$. Thus, it fulfils the conditions of \Cref{Problem:supersequence}. But since the convex hull of $P \circ Q_\pi^j \circ P$ equals the convex hull of $P \circ Q_\pi^j$, the lower bound of \Cref{thm:lower} applies.

\bibliographystyle{plainurl}
\bibliography{bibliography}

\appendix

\section{Output-Sensitive Algorithms and Pareto Front Algorithms}\label{app:pareto}

We observe that \Cref{alg:prob3} can be adapted to achieve output-sensitive running time:

\begin{corollary}
    \Cref{alg:prob3} can be modified to compute the upper hull of an input sequence $P$ that fulfils the convex hull promise in expected time $n\cdot2^{O(\sqrt{\log\log h})}$, where $h$ is the size of the convex hull.
\end{corollary}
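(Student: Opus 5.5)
The plan is to replace the fixed parameter choice $\lambda = n/\log^2 n$ in \Cref{alg:prob3} by a guessing scheme in the spirit of Chan's output-sensitive algorithm~\cite{Chan1996Optimal}. The idea is that the running time of \Cref{alg:prob2} depends on $\log\lambda$ only through the factor $2^{O(\sqrt{\log\log\lambda})}$ and the term $(\mu+n/g+\lambda g)\log\lambda$. If we take $\lambda$ and $g$ polynomial in $h$ rather than in $n$, all the superlinear factors become functions of $h$.

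The key steps, in order, are as follows.

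\textbf{(1) Bound the cost of the slab-local work for a fixed guess.} Fix a guess $H$ for $h$, and run \Cref{alg:prob3} with $\lambda=H$ (random sample) and $g=\log H$. Line~2 costs $2^{O(\sqrt{\log\log H})}(n+(n/\log H+H\log H)\log H)\log^2\log H$ by \Cref{lem:subproblem}. This is $n\,2^{O(\sqrt{\log\log H})}$ provided $H\le n/\log^2 n$, or more crudely $H^2\le n$.

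The final Graham scans inside the slabs can be too expensive when $\lambda$ is small, since slabs contain about $n/H$ points. I would therefore replace each of them by an output-sensitive hull computation, e.g.\ \cite{kirkpatrick1986ultimate}, taking $O(|P^{(\gamma)}|\log h_\gamma)$ time, where $h_\gamma$ is the number of hull vertices in that slab. Instead of summing these, I would simply abort if any slab's hull exceeds a budget. Concretely, Chan's wrapping with budget $H$ gives total time $O(n\log H)$, which is too much. So rather than using Graham inside slabs, I would recurse: apply \Cref{alg:prob2} again inside each slab with the lines of a fresh sample. More simply, one can note that once $\Bridges(P,L)$ is known, every upper-hull vertex either lies between consecutive bridges or is a bridge endpoint. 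Slabs in which $v^-=v^+$, or in which the two bridges coincide, contain no further hull vertices and are pruned. At most $h$ slabs survive.

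\textbf{(2) Choose the guessing schedule.} Use squaring guesses $H_t=2^{2^t}$ and abort a run as soon as the hull being assembled (the bridges found plus the nonempty surviving slabs) exceeds $H_t$. The total cost is then a geometric-type sum $\sum_t n\,2^{O(\sqrt{\log\log H_t})}=\sum_{t\le \log\log h+1} n\,2^{O(\sqrt t)}$. This is $n\,2^{O(\sqrt{\log\log h})}$ up to a factor $\log\log h$, which is absorbed into the exponent.

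\textbf{(3) Verify correctness under the promise.} Every pruning step only uses \Cref{lem:slabs-correct}, so it remains valid for any sample $L$.

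The main obstacle is step (1): making the work inside the slabs cost only $n\cdot 2^{O(\sqrt{\log\log h})}$, even though the slabs of a sample of size $H\ll n$ hold about $n/H$ points each. My first attempt would be to apply \Cref{alg:prob2} recursively on each surviving slab with $\lambda$ equal to the slab's point count. This is justified because pruning by indices and $x$-coordinates (as in \Cref{thm:det}) charges each point $O(1)$ times overall, and only $O(h)$ slabs are ever non-trivial. If that charging fails, the fallback is to accept depth $O(\log\log n/\log\log h)$ and argue that points in trivial slabs are discarded in linear time per level.
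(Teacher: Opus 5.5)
There is a genuine gap, and it is exactly the step you flag as the main obstacle: you never show how to compute the hulls inside the slabs within $n\cdot 2^{O(\sqrt{\log\log h})}$ time. With $\lambda=H$, and a successful guess that may be as small as $H\approx h$, the up to $h/2$ surviving slabs can together hold $\Theta(n)$ points. So pruning the trivial slabs does not by itself reduce the work, and Graham's scan on the survivors can cost $\Theta(n\log(n/H))$.

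Your concrete remedy is to run \Cref{alg:prob2} on each surviving slab with $\lambda$ equal to the slab's point count $m$. This fails by \Cref{lem:subproblem} itself: the term $\lambda g\log\lambda$ becomes $mg\log m$, which is superlinear, and the factor $2^{O(\sqrt{\log\log\lambda})}$ is then governed by $n/H$ rather than by $h$. Neither the remark that pruning charges each point $O(1)$ times nor the fallback with depth $O(\log\log n/\log\log h)$ is a running-time argument. A recursion that again samples only about $H$ lines per surviving slab might be analysable via the fact that at most $h/2$ slabs survive per level, but you give neither that algorithm nor that analysis.

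The missing idea is to oversample and then use the naive output-sensitive method inside slabs. For a guess $H$, take $\lambda=H^2$ random lines and $g=\log H$. Replace Graham's scan in Line~7 of \Cref{alg:prob3} by Jarvis march, aborting once more than $H$ hull vertices have been produced. The largest slab then has $O((n/H^2)\log H)$ points in expectation. Gift wrapping costs $O(|P^{(\gamma)}|(h_\gamma+1))$ in slab $\gamma$, hence $O(n+(n/H^2)\log H\cdot H)=O(n)$ in total. Meanwhile \Cref{lem:subproblem} gives $2^{O(\sqrt{\log\log H})}\bigl(n+(n/\log H+H^2\log H)\log H\bigr)\log^2\log H=n\cdot 2^{O(\sqrt{\log\log H})}$, provided $H\le n^{0.01}$, say.

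With this, your doubly exponential guessing schedule and the summation in step (2) go through. You also need a cap on the guesses. Once the next guess would exceed $n^{0.01}$, at which point $h>n^{0.005}$, fall back to the unmodified \Cref{alg:prob3}. Its cost $n\cdot 2^{O(\sqrt{\log\log n})}$ is already $n\cdot 2^{O(\sqrt{\log\log h})}$, because then $\log\log h\ge\log\log n-O(1)$.
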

\begin{proof}
Consider a guess $h$ of the size of the convex hull. We modify \Cref{alg:prob3} by replacing Graham's scan with Jarvis march in Line 7. The expected run time is 
$2^{O(\sqrt{\log\log b})}(n+(n/g+b)\log b)\log^2 g + O(((n/b)\log b)\cdot h)$,
which is $n2^{O(\sqrt{\log\log h})}$, by setting $b=h^2$ and $g=\log h$.

Since we do not know $h$ in advance, we employ the standard guessing trick, trying $h=2^{2^i}$ until $h>n^{0.01}$, in which case we default to the Graham's scan version which is then already a $n\cdot2^{O(\sqrt{\log\log h})}$ time algorithm.
Since the guesses are $h_i=2^{2^i}$, the total time over all unsuccessful guesses and the final successful one is
$\sum_{i=1}^{t} n\cdot 2^{O(\sqrt{\log\log h_i})}=n\cdot 2^{O(\sqrt{t})}$,
where $t$ is the first index with $h_t\ge h$. As $t=\Theta(\log\log h)$, this is $n\cdot 2^{O(\sqrt{\log\log h})}$.
\end{proof}

Our algorithms can also be adapted to solve the Pareto front problem.
When computing the Pareto front of a point sequence $P$, we can define $\bridge(P,\ell)$, for any vertical line $\ell$, as the unique edge of the Pareto front that intersects $\ell$.
Observe that when computing the Pareto front, $\bridge(P,\ell)$ is defined by the Pareto front vertex immediately before and after the line $\ell$. In particular, this can be computed in $O(|P|)$ time via a linear scan of $P$. Therefore, \Cref{lem:abstract-bridge-compute} also holds for the Pareto front. With this, we obtain \Cref{cor:pareto1} as Pareto-front version of \Cref{thm:det}.

\begin{corollary}\label{cor:pareto1}
    \Cref{alg:prob1} can be modified to compute the Pareto front of a sequence of points fulfilling the Pareto-front promise in $O(n\sqrt{\log n})$ time.
\end{corollary}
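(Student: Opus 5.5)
We will reuse \Cref{alg:prob1} and the analysis of \Cref{thm:det} essentially verbatim, swapping each convex-hull-specific primitive for its Pareto-front analogue. The recurrence of \Cref{thm:det} only uses three ingredients: the $O(n+z\log b)$ cost of building the slab decomposition, the bound $O(n/b)$ on slab sizes, and the fact that every $L$-bad pair loses at least one point to pruning. Once these hold for the Pareto front, the calculation with $F(n)=C(n\log n/\log b+n\log b)$ and $\log b=\sqrt{\log n}$ carries over unchanged and gives $O(n\sqrt{\log n})$.

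\textbf{Step 1: slab decomposition.} \Cref{detLem1} makes no use of the promise or of any hull structure; it only manipulates $x$-coordinates and indices. It therefore applies as stated and yields $L$ together with the sets $P\cap\gamma$, of size $O(n/b)$ each, in $O(n+z\log b)$ time.

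\textbf{Step 2: computing the bridges.} Define $\bridge(P,\ell)$ as the pair of consecutive Pareto-front vertices straddling $\ell$. The left vertex is the highest point left of $\ell$ whose $y$-coordinate exceeds that of every point to its right; equivalently, let $r$ be the highest point right of $\ell$, so that the left vertex is the highest point left of $\ell$ with $y>y(r)$, if one exists. We can compute all bridges $\Bridges(P,L)$ in $O(n)$ total time, avoiding the Graham's-scan-style merge. Compute the maximum-$y$ point of each slab, then take suffix maxima over the slabs from right to left. For each line $\ell_i$, the right endpoint of its bridge is the suffix maximum of the slabs to the right of $\ell_i$. For the left endpoint, we need the Pareto vertex immediately left of $\ell_i$, which lies in a slab whose maximum exceeds the right suffix maximum. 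Locate the nearest such slab to the left with a stack scan, in amortized $O(1)$ per line, and then scan that slab in $O(n/b)$ time to find its rightmost point above the threshold. This costs $O(b\cdot n/b)=O(n)$. The safe sets $P^{(\gamma_i)}$ are defined exactly as before, using indices and $x$-coordinates of the bridge endpoints.

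\textbf{Step 3: correctness and pruning.} The Pareto analogue of \Cref{lem:slabs-correct} holds by the same argument. Every Pareto vertex in $\gamma_i$ has $x$-coordinate between $v_i^+$ and $v_{i+1}^-$, since these are consecutive front vertices around the slab boundaries. Under the Pareto-front promise, such a vertex also has index between theirs. The pruning argument for an $L$-bad pair $(p,q)$ separated by $\ell$ is identical: compare $\IND(q)$ with $\IND(v)$, where $v$ is the right endpoint of $\bridge(P,\ell)$. Finally, the front is the concatenation of the recursive fronts and the bridges, because the front restricted to a slab equals the front of its safe set once the bridges are fixed.

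\textbf{Main obstacle.} The main point to check is the edge case in which no Pareto vertex lies to the left of a line, or a bridge degenerates. This is handled by conventions with the sentinel lines $\ell_0,\ell_{|L|+1}$, and is analogous to the convex-hull case.
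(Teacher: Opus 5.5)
Your proposal is correct, and its skeleton matches the paper's: keep \Cref{alg:prob1}, reuse \Cref{detLem1}, the pruning argument and the recurrence of \Cref{thm:det} verbatim, and only replace hull bridges by Pareto bridges. The one step where you take a different route is computing $\Bridges(P,L)$.

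The paper only notes that a single Pareto bridge can be found by a linear scan, so \Cref{lem:abstract-bridge-compute} still holds. It then lets the Graham's-scan-style merge of \Cref{cor:compute-slab-decomposition} carry over implicitly. You instead use the staircase structure directly: slab maxima, suffix maxima, the nearest qualifying slab, and one $O(n/b)$ scan per line.

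The paper's route is uniform with the hull case and needs no new argument. It does leave you to check that the merge and its ``compatibility'' test make sense for staircases. Your route is self-contained. It also makes explicit the degenerate case where no front vertex lies left of a line: then everything left of that line is dominated by the globally highest point and is pruned.

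Your nearest-slab search does not even need a stack. Write $M_j$ for the largest $y$-coordinate in $\gamma_j$ and $S_i=\max_{k\ge i}M_k$. Since $S$ is non-increasing, the nearest $j<i$ with $M_j>S_i$ is exactly the nearest $j<i$ with $M_j>S_{j+1}$, that is, the nearest slab that contributes to the front. A single pass after computing the suffix maxima finds it for every $i$.

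One sentence needs fixing. In both of your phrasings you characterise the left endpoint as the \emph{highest} point left of $\ell$ with $y>y(r)$. That point is the top of the entire staircase, not the front vertex adjacent to $\ell$, and using it would make the safe sets wrong. The correct characterisation is the \emph{rightmost} such point, equivalently the lowest front vertex left of $\ell$. Your actual procedure already uses this: it takes the rightmost point above the threshold in the nearest slab whose maximum exceeds the threshold. So the algorithm and the $O(n\sqrt{\log n})$ bound stand once that sentence is corrected.
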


For Algorithm~\ref{alg:prob2}, we observe that all lines apply to both the convex hull and the Pareto front.
Specifically, we note that to compute the bridge across a line for which we have compressed the groups (in Line 3) in its neighbouring slabs, it suffices to maintain only the first and last point of the Pareto front of each group. 
(In fact, we could even avoid dividing into groups altogether, so many of the steps could be simplified for the Pareto front problem.)
This recovers a similar bound on $T(n,\mu,O(1))$ in the proof of \Cref{lem:subproblem}, resulting in a Pareto-front version of \Cref{lem:subproblem}. So we obtain \Cref{cor:pareto2} as Pareto-front version of \Cref{thm:prob}.

\begin{corollary}\label{cor:pareto2}
    \Cref{alg:prob3} can be modified to compute the Pareto front of a sequence of points fulfilling the Pareto-front promise in expected time  $O(n\cdot 2^{O(\sqrt{\log\log n})})$ 
    (or $n\cdot2^{O(\sqrt{\log\log h})}$ where $h$ is the number of points on the front).
\end{corollary}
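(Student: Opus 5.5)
The statement to prove is \Cref{cor:pareto2}. I would follow the structure of the proof of \Cref{thm:prob} and re-verify each ingredient for the Pareto front. The plan has four steps.

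\textbf{Step 1: bridges and safe sets.} For a vertical line $\ell$, define $\bridge(P,\ell)$ as the pair of consecutive Pareto-front vertices straddling $\ell$.
\begin{itemize}
\item Computing it takes one linear scan. Among points right of $\ell$, take the one with maximum $y$; this is the first front vertex right of $\ell$, call it $v^+$. Among points left of $\ell$ with $y>y(v^+)$, take the one of maximum $x$; this is $v^-$. So \Cref{lem:abstract-bridge-compute} holds.
\item \Cref{lem:slabs-correct} transfers verbatim. A front vertex inside slab $\gamma_i$ must lie, in $x$-order, between $v_i^+$ and $v_{i+1}^-$. Under the Pareto-front promise it must also lie between them in index.
\item The pruning argument for $L$-bad pairs in the proof of \Cref{thm:det} used only index and $x$-order relative to the bridge endpoint. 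It therefore also transfers, giving $\sum_\gamma |P^{(\gamma)}|\le n-z$.
\end{itemize}

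\textbf{Step 2: \Cref{lem2} is unchanged.} It never uses hull geometry; it is purely combinatorial in indices and $x$-coordinates.

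\textbf{Step 3: the group compression in \Cref{alg:prob2}.} In Line 3, replace the upper hull of each group by its Pareto front. Only the group's first front point and its maximum-$y$ point matter for any line outside the group's slab. The first front point is its maximum-$y$ point; the last is its maximum-$x$ point. Hence each group can be compressed to $O(1)$ representatives.
\begin{itemize}
\item The base case $\lambda=O(1)$ of \Cref{lem:subproblem} becomes linear scans over $n+O(\mu)$ representatives. This costs $O(n+\mu)$, which is at most the LP bound $O((n+\mu)\log^2 g)$.
\item The recurrence and its solution therefore go through identically, giving the Pareto version of \Cref{lem:subproblem}.
\end{itemize}

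\textbf{Step 4: assemble and go output-sensitive.} Sample $\lambda=n/\log^2 n$ random $x$-coordinates and use $g=\log n$. Replace Graham's scan in Line 7 by an $O(k\log k)$ Pareto-front computation (sort by $x$, then scan), applied to each $P^{(\gamma)}$. This yields expected time $n2^{O(\sqrt{\log\log n})}$.

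For the $h$-sensitive version, replace the final step by an output-sensitive Pareto routine costing $O(k\cdot h)$: repeatedly take the maximum-$y$ point among those to the right of the current one. Then use the doubling trick $h=2^{2^i}$ exactly as in the preceding corollary.

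\textbf{Main obstacle.} I expect the delicate point to be Step 3, specifically that bridge computations in the recursion see only group representatives. I need that replacing a group, which lies entirely inside one slab, by its endpoint representatives never changes $\bridge(\cdot,\ell)$ for lines $\ell$ on the slab boundaries or outside. I would verify this directly from the scan characterization in Step 1.

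The \textbf{for}-loop filtering of $Q_i^{(\gamma)}$ by $x$ alone also needs checking. A group point outside the $x$-range $[x(v^-),x(v^+)]$ that still lies in the slab is dominated, since the bridge endpoints are front vertices. Hence dropping it is safe.
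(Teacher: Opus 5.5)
You follow the paper's own sketch essentially step for step: linear-scan Pareto bridges, \Cref{lem2} unchanged, groups compressed in Line~3, the same recurrence, and a Jarvis-style final step with the doubling trick. Steps~1, 2 and~4 and your $x$-filtering argument are fine. The gap is exactly the point you flagged in Step~3, and it does not survive the verification you propose. Reducing a group to the first and last vertices of its Pareto front (its max-$y$ and max-$x$ points) does \emph{not} preserve $\bridge(\cdot,\ell)$.

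Your own scan characterisation shows why. A group to the right of $\ell$ indeed contributes only its max-$y$ point. A group to the left of $\ell$, however, contributes its max-$x$ point among those with $y>y(v^+)$, and that can be any vertex of its staircase. For instance, put the group $\{(1,9),(2,5),(3,1)\}$ left of $\ell\colon x=10$ and the single point $(11,3)$ to its right. The true bridge is $((2,5),(11,3))$. The compressed group $\{(1,9),(3,1)\}$ instead gives $((1,9),(11,3))$, and the safe-set filter of the slab left of $\ell$ would then prune the genuine front vertex $(2,5)$. The paper's appendix asserts the same first-and-last compression, so you share this slip with it, but it is a real error.

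The repair is cheap and keeps all bounds.
\begin{itemize}
\item In Line~3, store each group's whole Pareto staircase as an $x$-sorted array; it is computed anyway in $O(g\log g)$ time.
\item In the base case, for each of the $O(1)$ lines $\ell$, first find $v^+$ as the max-$y$ point among the individual points and the groups' max-$y$ points right of $\ell$.
\item Then binary-search every group left of $\ell$ for its last staircase vertex with $y>y(v^+)$. Take the max-$x$ point among these candidates and the individual points left of $\ell$ with $y>y(v^+)$.
\end{itemize}
This costs $O(\log g)$ per group per line, so the base case is $O((n+\mu)\log g)\le O((n+\mu)\log^2 g)$. The recurrence and its solution in \Cref{lem:subproblem} therefore go through unchanged.

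The filtering of Line~9 becomes restricting each stored staircase to a contiguous sub-array by binary search. The points this misses from the front of $Q_i^{(\gamma)}$ are dominated inside $Q_i$, hence not on the front of $P_0$, so omitting them is harmless. With this change the rest of your proof is fine.
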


\section{Additional Details for Lower Bounds}\label{app:lower}

\subsection{Model of Computation}\label{app:model}

Formally, we assume that our input is a set $P$ of $n$ points given as sequence. 
An \emph{algorithm} specifies for each fixed input size $n$, a (RAM) programme which is a list of \emph{instructions}. Instructions can manipulate memory, halt the programme (marking its termination) or make \emph{comparisons}. 
After each instruction, the programme continues to the next instruction in the list, except for when a \emph{comparison} instruction is executed. 
To allow for standard geometric and structural checks, we permit two types of \emph{comparisons}: coordinate comparisons and orientation tests. A \emph{coordinate comparison} compares the $x$-coordinate or $y$-coordinate of two points $p_i$ and $p_j$. An \emph{orientation test} evaluates whether or not a point $p_k$ lies strictly above the line through two points $p_i$ and $p_j$. A comparison returns a Boolean, and if the return value is true,
%A \emph{comparison} specifies integers $i, j \in [n]$ and an integer $\ell$. It then compares the points $p_i$ and $p_j$, outputting a Boolean (e.g., by comparing their $x$-coordinate or $y$-coordinate). 
%If the Boolean is true, 
the programme then executes the $\ell$'th instruction, where $\ell$ is given by the algorithm. If the Boolean is false, it continues to the next instruction. 
We assume a model of computation where our programme cannot have instructions that manipulate $P$. I.e., all non-comparison instructions must consist of inputs that are defined independently of the input $P$ (such as deterministic, input-independent constants defined to index memory).
We observe that this model of computation supports our upper bound algorithms.

\subsection{Lower Bound Construction}\label{app:lowerconstruct}

\subparagraph*{Defining the input sequence.}
The base of our construction is an input sequence which, for notational convenience, is the union of two input sequences $P$ and $Q_\pi$.
The point set $P$ consists of $n$ points from a parabola. Specifically, we define each point $p_i$ as $p_i := (i, n^2 - i^2)$. Note that the convex hull and Pareto front of $P$ are exactly $P$, in-order. 
The point set $Q_\pi$ consists of $n-1$ points that are selected by the adversary, who constructs these by selecting a permutation $\pi : [n - 1] \rightarrow [n - 1]$. 
Each point $q_j \in Q_\pi$ is then placed just below the $\pi(j)$'th edge of the convex hull of $P$. 
That is, $q_j := \left( \pi(j) + \frac{1}{2}, n^2 - \frac{(\pi(j) + 1)^2 + \pi(j)^2}{2} -\varepsilon\right)$ for some sufficiently small $\varepsilon > 0$. Thus, the convex hull of the input sequence $P \circ Q_\pi$ equals $P$, in-order. 
Note that the input sequence $P \circ Q_\pi$ fulfils the promise. Thus, if we are given an input sequence where we \emph{know} that it is $P \circ Q_\pi$ for some unknown permutation $\pi$ then we can construct the convex hull for $P \circ Q_\pi$ in $o(n \log n)$ time.
However, we show that if the adversary can every so slightly modify this input to only fulfil an almost promise, then a correct algorithm requires $\Omega(n \log n)$ time.

\subparagraph{Adversarial almost promises.}
Finally, we define alternative point sets for the adversary to invoke. For any fixed permutation $\pi$ and integer $j \in [n]$, we define the point set $Q^j_\pi$. The point set $Q^n_\pi = Q_\pi$, and each other $Q^j_\pi$ is $Q_\pi$ with one change: the point $q_j$ is changed to $q_j := (\pi(j) + \frac{1}{2}, n^2 - (\pi(j) + \frac{1}{2})^2 )$. The point $q_j$ thus appears on the convex hull (or Pareto front) of $P \circ Q^j_\pi$ in-between $p_{\pi(j)}$ and $p_{\pi(j) + 1}$. 
Observe that each input sequence $P \circ Q^j_\pi$ fulfils the almost promise. 

\begin{proof}[Proof of \Cref{thm:lower}]
    We consider any fixed deterministic algorithm $\mathcal{A}$ to construct the convex hull of our input sequence $I$, and observe that given the fact that our input sequence is $P \circ Q_\pi^j$ for some $(\pi, j)$, it must be that the upper convex hull of $I$ equals its convex hull.
    Note that a point $q \in I$ does not appear on the upper convex hull, if and only if there exist distinct points $a, b \in I$ such that $q$ lies strictly under the line through $a$ and $b$. 
    If the algorithm $\mathcal{A}$ finds a triple $(a, b, q)$ where $q$ lies under the line $\overline{ab}$ through $a$ and $b$, and, the algorithm compares $\overline{ab}$ to $q$ through an orientation test, then we say that the algorithm $\mathcal{A}$ has created a \emph{witness} for $q$.

    The adaptive adversary will observe the comparisons made by our programme, and will choose at any point in time the input parameters $i$ and $\pi$ such that the input $P \circ Q_\pi^j$ is consistent with the comparisons made thus far, such that $\mathcal{A}$ is either incorrect or maximises its running time. 
    Note that for any input $P \circ Q_\pi^n$, each point $q_j \in Q_\pi^n = Q_\pi$ has a unique witness which is $(\pi(j), \pi(j) + 1, q)$. Indeed, by choosing $\varepsilon$ (see above) sufficiently small, the pair $(\pi(j), \pi(j)+1)$ is the unique pair of points such that $q_j$ lies under their corresponding line. 
    It follows that any algorithm $\mathcal{A}$ that constructs for all $q_j \in Q_\pi^n$ their corresponding witness $(\pi(j), \pi(j) + 1, q)$ recovers $\pi$. In other words, any algorithm that identifies for each $q_j \in Q_\pi^n$ the unique edge of $\UH(P)$ that lies above $q_j$, sorts $Q_\pi^n$ along $\pi$.  
    We first claim that any fixed algorithm $\mathcal{A}$ for which there exists an input $P \circ Q^n_\pi$ where $\mathcal{A}$ \emph{does not} construct for each $q_j \in Q_\pi^n$ its witness, is incorrect. 

    Indeed, consider such an algorithm and let there exist at least one $q_j \in Q_\pi^n$ for which the algorithm did not compare $q_j$ to the line through $p_{\pi(j)}$ and $p_{\pi(j) + 1}$ via an orientation test.
    The adversary now provides us with one of two inputs: either $Q_\pi^j$ or $Q_\pi^n$. 
    In the first input sequence, $q_j$ appears in the output sequence in-between $p_{\pi(j)}$ and $p_{\pi(j) + 1}$.
    In the second input sequence, $q_j$ does not appear in the output sequence. 
    However, the point $q_j$ lies above the line $\overline{ab}$ for all $(a, b) \neq (\pi(j), \pi(j) + 1)$. Thus, the algorithm $\mathcal{A}$ has no way to distinguish between these two inputs. The algorithm must decide whether $q_j$ is part of the output, and it follows that on at least one of these inputs, the algorithm must be incorrect. 

    Constructing for each $q_j \in Q^n_\pi$ its witness equals sorting $Q_\pi^n$ along the permutation $\pi$. Since sorting has an $\Omega(n \log n)$ comparison-based lower bound, it follows that there exists a permutation $\pi$ for which a correct algorithm $\mathcal{A}$ requires $\Omega(n \log n)$ comparisons which concludes the proof. 
\end{proof}

\end{document}